\documentclass[journal]{IEEEtran}
\IEEEoverridecommandlockouts

\usepackage{cite}
\usepackage{amsmath,amssymb,amsfonts}
\usepackage{algorithmic}
\usepackage{graphicx}
\usepackage{textcomp}
\usepackage{bm}
\usepackage{mathtools}
\usepackage{amsthm}
\usepackage{url}
\usepackage[sort&compress]{natbib}
\newtheorem{proposition}{Proposition}
\newtheorem{remark}{Remark}

\def\BibTeX{{\rm B\kern-.05em{\sc i\kern-.025em b}\kern-.08em
    T\kern-.1667em\lower.7ex\hbox{E}\kern-.125emX}}

\newcommand{\mb}[1]{\mathbf{#1}}
\newcommand{\mc}[1]{\mathcal{#1}}
\newcommand{\mr}[1]{\mathrm{#1}}
\newcommand{\E}{\mathbb{E}}
\newcommand{\CN}{\mathcal{CN}}
\newcommand{\diag}{\mathrm{diag}}

\begin{document}

\title{Velocity Index Modulation for Movable Antenna Systems}

\author{Yan Zhang, Indrakshi Dey, Shuaishuai Han, 
Ioannis Krikidis, Nicola Marchetti

\thanks{Y.~Zhang, I.~Dey, and N.~Marchetti are with the CONNECT Centre,
Trinity College Dublin, Ireland (e-mail: {zhangy42@tcd.ie}).
S.~Han and I.~Krikidis are with the Department of Electrical and Computer Engineering,
University of Cyprus, Nicosia, Cyprus.
This work was supported in part by the Chinese Scholarship Council. This work was also supported in part by Taighde Eireann - Research Ireland under Grant 13/RC/2077\_P2, by the EU MSCA Project “COALESCE” under Grant Number 101130739, and by US-Ireland R\&D
Partnership Programme RI-SFI-23/US/3924.
}}

\maketitle

\begin{abstract}
In movable antenna (MA) systems, antenna movement induces Doppler frequency
shifts that are conventionally treated as an impairment requiring mitigation.
In this paper, we propose \emph{Velocity Index Modulation for Movable
Antennas} (VIM-MA), which reframes this Doppler effect as an additional
information-bearing degree of freedom. The transmitter selects the antenna
movement velocity from a pre-designed discrete codebook, so that the resulting
Doppler shift conveys extra index bits beyond those carried by the conventional
modulation symbol. Codebook design is formulated as a spectral efficiency
maximization over the velocity spacing $\delta$ and codebook size $N_v$,
subject to an average-information Cram\'{e}r--Rao-type bound
(AIF-CRB) on velocity estimation accuracy, a physical track length constraint,
and a spatial channel decorrelation constraint. A logarithmic change of
variables renders the problem convex and yields a closed-form solution.
We further establish that the peak codebook velocity equals
$D_{\max}/T_s$, and that the decorrelation-limited spacing always lies below
the Rayleigh Doppler resolution, so that VIM-MA is intrinsically a
super-resolution scheme. A covariance-matched detector is derived that requires
neither per-path angle knowledge nor channel state information. Simulation
results show that the decorrelation-limited codebook, which carries five index
bits over a $10\lambda$ aperture, is attainable only with oracle angle
knowledge, whereas the channel-state-free detector is limited to three bits but
reaches that payload approximately $10$~dB earlier than position-domain
indexing charged a realistic pilot budget.
\end{abstract}

\begin{IEEEkeywords}
Movable antenna, index modulation, Doppler shift, velocity codebook,
spectral efficiency maximisation, Cram\'{e}r-Rao bound, codebook design.
\end{IEEEkeywords}

\section{Introduction}
\label{sec:intro}

Sixth-generation (6G) wireless networks motivate physical-layer designs that
exploit not only time, frequency and power but also increasingly flexible
spatial degrees of freedom~\cite{tataria6G}. Movable antenna (MA) systems have
attracted attention in this context because repositioning elements within a
finite spatial region samples favourable channel conditions unavailable to
fixed-position antennas~\cite{zhu2022MA,ma2022MIMOcap}: an MA transmitter can
reshape the effective channel response without additional radio-frequency
chains, making antenna movement a low-hardware-cost route to higher spectral
efficiency.

Most existing MA studies focus on the spatial benefit of antenna
repositioning, namely the ability to exploit channel variations over the
available movement region. However, antenna movement also produces a second
physical effect: it induces a Doppler frequency shift on the transmitted
waveform. In classical wireless channels, Doppler is typically associated with
uncontrolled user, transmitter, or scatterer motion and is often treated as a
source of time variation that must be estimated or mitigated~\cite{clarke1968}.
In an MA system, by contrast, part of the Doppler shift is generated by
deliberate transmitter-side motion. This distinction is important because a
controllable Doppler component can be designed, selected, and detected. It
therefore raises the central question of this work: instead of regarding
movement-induced Doppler only as an impairment, can it be converted into an
additional information-bearing degree of freedom?

Index modulation (IM) offers a natural way to answer this question.
Rather than conveying information solely through the value of a modulation
symbol, IM also embeds bits into the index of a selected resource
\cite{basar2016IM}. This idea has appeared in several forms, including antenna
indices in spatial modulation~\cite{mesleh2008SM}, subcarrier indices in
OFDM-IM~\cite{basar2013OFDMIM}, delay-Doppler resource indices in Joint delay-Doppler index modulation (JDDIM) \cite{tek2024JDDIM}, and position-pattern indices in fluid-antenna-aided IM
\cite{zhu2024FAIM}. These schemes show that a properly chosen physical
resource index can provide an additional data stream without necessarily
requiring a higher-order constellation. Motivated by this principle, we propose
\emph{Velocity Index Modulation for Movable Antennas} (VIM-MA), in which the
transmitter maps index bits to a discrete MA velocity codebook. The selected
velocity produces a predictable Doppler signature at the receiver, while the
conventional QAM symbol simultaneously carries the ordinary modulation bits.

FA-IM~\cite{zhu2024FAIM} and position index modulation for fluid
antennas~\cite{pimfas} embed bits in the selected port or position pattern.
The fluid-antenna MIMO extension in~\cite{faimMIMO} uses position-pattern
indexing with a designed codebook and maximum-likelihood (ML) detection. Most
closely related, the MA-IM framework in~\cite{huang2026MAIM}
formulates movable-antenna index modulation by
discretising the movement region into candidate anchors, derives joint ML and
two-stage detectors, establishes error-probability bounds, and concludes that
optimizing channel-domain separation \emph{alone} is insufficient because the
geometry of the joint index--constellation also matters. VIM-MA differs in the
indexed physical quantity: the index is carried by the \emph{velocity} of the
movement and read from the temporal Doppler signature accumulated within a
slot, rather than from an instantaneous spatial channel realization. To assess
it on the same footing as these position-domain schemes we adopt their detector
and error-probability methodology in Section~\ref{sec:detect} and report
effective throughput at matched reliability in Section~\ref{sec:sim}.

The key difference between VIM-MA and existing position-based
indexing schemes is that velocity is not only an index label; it is also a
physical motion variable. In FA-IM~\cite{zhu2024FAIM}, the selected position is
directly the index-bearing channel state. In VIM-MA, however, a selected
velocity has two coupled consequences. During the slot, it determines the
Doppler signature used for velocity-index detection. At the end of the slot, it
also determines the antenna displacement and hence the spatial channel
correlation between adjacent velocity levels. Therefore, the velocity codebook
must satisfy three simultaneous requirements: adjacent Doppler signatures must
be resolvable at the receiver, the resulting antenna displacement must remain
within the finite MA track, and neighboring velocity-induced channel states
must be sufficiently de-correlated. This coupling between Doppler resolvability
and spatial decorrelation is the main technical challenge addressed in this
paper.

We therefore develop an analytical codebook-design framework. We first
characterise the spatial correlation between channels generated by different
velocity levels, giving a decorrelation constraint controlled by a threshold
$\rho_{\max}$, and then derive an AIF-CRB for velocity
estimation under the multipath MA channel model, giving a Doppler-resolvability
constraint. Together with the finite track length these determine the minimum
feasible spacing and the maximum number of velocity levels.

The key contributions are summarized as follows:
\begin{itemize}
  \item We introduce the VIM-MA architecture and derive a MISO signal
  model in which QAM symbols and velocity indices are transmitted
  simultaneously through the same moving antenna array. We further derive the
  spatial velocity correlation and the average-information
  CRB (AIF-CRB) for Doppler/velocity
  estimation, which together form the two physical constraints of the velocity
  codebook design (Section~\ref{sec:system}).
  \item We formulate velocity-codebook design as an SE maximization
  problem over the velocity spacing and codebook size, subject to Doppler
  resolvability, finite track length, and channel-decorrelation constraints. We
  show that the problem admits a closed-form globally optimal solution through
  a convex reformulation, and we identify the active constraint that determines
  the optimal spacing (Section~\ref{sec:convex}).
  \item We show that the velocity is not identifiable from one slot
  without angle knowledge, and resolve this by exploiting the statistical
  scattering model: the movement-induced Doppler spectrum has support set by
  $v_n/\lambda$ alone. This yields a covariance-matched index detector that
  needs no angle information and no CSI, together with a low-complexity
  Doppler-spread alternative, an ambiguity-based union bound on the index
  error probability, and a complexity comparison
  (Section~\ref{sec:detect}).
  \item We establish the mechanical operating envelope. At the
  track-limited optimum the largest codebook velocity is exactly
  $v_{\max}=D_{\max}/T_s$, and the decorrelation-limited spacing always
  satisfies $\delta^\star T_s=d_c\le0.383\lambda$, so the scheme
  \emph{necessarily} operates below the Rayleigh Doppler resolution
  (Section~\ref{sec:feas}).
  \item We provide numerical results that validate the spatial
  correlation model and the AIF-CRB, illustrate the binding-constraint
  behavior, and benchmark the effective throughput at matched index
  reliability of VIM-MA against conventional MA with adaptive
  modulation, Spatial
  Modulation (SM), FA-IM, and joint delay-Doppler index modulation (JDDIM)
  under common physical constraints (Section~\ref{sec:sim}).
\end{itemize}

\textit{Notation:} Boldface lower/upper-case letters denote vectors/matrices.
$(\cdot)^T$ and $(\cdot)^H$ denote the transpose and Hermitian transpose,
respectively. $\mathcal{CN}(0,\sigma^2)$ denotes a circularly symmetric
complex Gaussian distribution. $J_0(\cdot)$ is the zeroth-order Bessel
function of the first kind. $\lfloor\cdot\rfloor$ denotes the floor function.
The normalized sinc function is
$\mathrm{sinc}(x)=\sin(\pi x)/(\pi x)$, and $Q(\cdot)$ denotes the standard
Gaussian tail probability. Throughout this paper the maximum track
length is denoted $D_{\max}$ rather than $L_{\max}$, to avoid a collision with
the number of propagation paths $L$; likewise the Kronecker delta is written
$\delta_{ll'}^{\mr{K}}$ to avoid a collision with the velocity spacing
$\delta$.

\section{System Model}
\label{sec:system}

\begin{figure*}[t]
\centering
\includegraphics[width=0.70\textwidth]{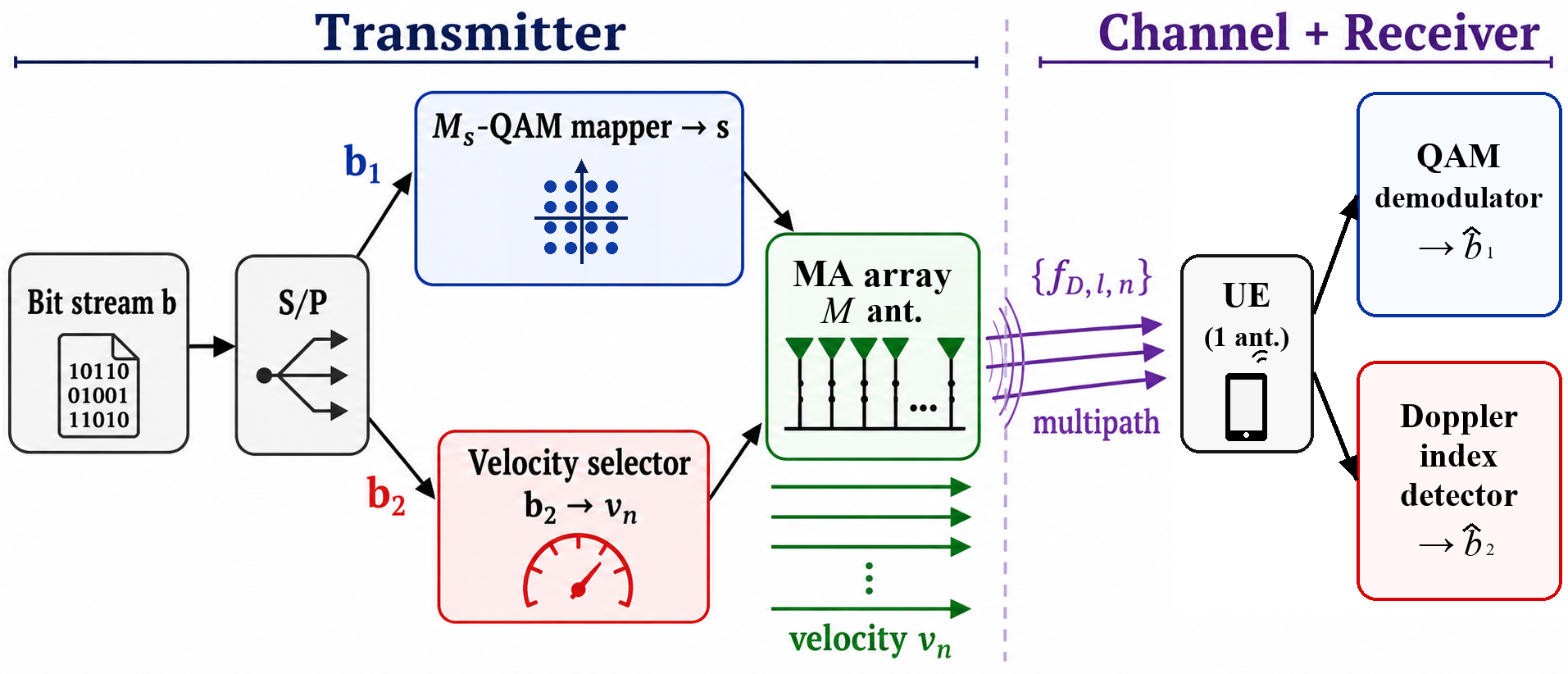}
\caption{VIM-MA architecture. The bit stream $\mathbf{b}$ is split into a
QAM sub-stream $\mathbf{b}_1$ and an index sub-stream $\mathbf{b}_2$. The
index bits select a movement velocity $v_n$ from the codebook $\mathcal{V}$;
all $M$ movable antennas then slide along their tracks at $v_n$, imprinting
a velocity-dependent Doppler signature $\{f_{\mathrm{D},l,n}\}$ on the
multipath channel. The receiver detects the velocity index from the Doppler
content to recover $\hat{\mathbf{b}}_2$ and demodulates the QAM symbol to
recover $\hat{\mathbf{b}}_1$.}
\label{fig:arch}
\end{figure*}

\subsection{MA Channel Model with Movement-Induced Doppler}
\label{sec:channel}

We consider a single-user downlink system in which a base station (BS),
equipped with $M$ movable antennas (MAs) each deployed on a
one-dimensional (1D) track of maximum length $D_{\max}$, transmits to
a single-antenna user equipment (UE). The system adopts a
multiple-input single-output (MISO) configuration with one served user. Fig.~\ref{fig:arch} shows the complete VIM-MA transceiver flow: the
transmitter maps one bit stream to the QAM symbol and the other to the MA
velocity, and the receiver separately demodulates the QAM symbol and detects
the velocity index from the Doppler content of the received samples.

The wireless channel comprises $L$
distinguishable propagation paths. For the velocity-bound
derivation, these paths are assumed resolvable into orthogonal delay bins with
independent post-processing noise, and their AoDs are available at the
receiver from channel estimation; instantaneous path parameters are not
required at the transmitter for codebook design. All $M$ MAs move at a common constant velocity $v$ during each slot of
duration $T_s$, so the position of the $m$-th MA at time $t$ is
$p_m(t) = p_{0,m} + vt$, where $p_{0,m}$ is its initial position.

According to the field-response channel model~\cite{zhu2022MA,zhu2024FAIM},
the channel coefficient of the $m$-th MA at time $t$ is
\begin{equation}
    h_m(t)
    =
    \frac{1}{\sqrt{L}}
    \sum_{l=1}^{L}
    \alpha_l\,
    e^{j\frac{2\pi}{\lambda}p_m(t)\cos\theta_l},
    \label{eq:hm_t}
\end{equation}
where $\alpha_l \sim \mathcal{CN}(0,1)$ and $\theta_l \in (0,\pi)$ denote
the complex path gain and angle of departure (AoD) of the $l$-th path,
respectively, and $\lambda$ is the carrier wavelength. The AoDs
$\{\theta_l\}$ are modeled as i.i.d.\ random variables following the standard isotropic
scattering assumption~\cite{clarke1968} as $\theta_l\sim\mathcal{U}(0,\pi)$,
since the scattering geometry is unknown at the transmitter. Substituting
$p_m(t) = p_{0,m} + vt$ into~\eqref{eq:hm_t} yields
\begin{equation}
    h_m(t)
    =
    \frac{1}{\sqrt{L}}
    \sum_{l=1}^{L}
    \alpha_l\,
    e^{j\frac{2\pi}{\lambda}p_{0,m}\cos\theta_l}\,
    e^{j2\pi f_{{\rm D},l}\,t},
    \label{eq:hm_t_expanded}
\end{equation}
where the movement-induced Doppler shift on path $l$ is
\begin{equation}
    f_{{\rm D},l} \triangleq \frac{v\cos\theta_l}{\lambda}.
    \label{eq:doppler_def}
\end{equation}
Since all $M$ MAs share the same velocity $v$, the Doppler shift $f_{{\rm D},l}$
is identical across all antennas and depends only on the common velocity $v$
and the AoD $\theta_l$.
The overall channel vector is
$\mathbf{h}(t) = [h_1(t),\ldots,h_M(t)]^T \in \mathbb{C}^{M}$.
The received signal at the single-antenna receiver is
\begin{equation}
    y(t)=\mathbf{h}^H(t)\,\mathbf{w}\,s + n(t),
    \label{eq:received_signal}
\end{equation}
where $\mathbf{w}\in\mathbb{C}^{M}$, $\|\mathbf{w}\|=1$, is the transmit
beamforming vector, $s$ is the transmitted symbol with
$\mathbb{E}[|s|^2] = P$, and $n(t)\sim\mathcal{CN}(0,N_0)$ is additive
white Gaussian noise. The normalization $\|\mathbf{w}\|=1$ decouples the beamforming direction from
the transmit power; the specific design (MRT, ZF, \dots) is a system-level
choice that affects the velocity codebook only through the average SNR
in~\eqref{eq:SNR_def}.

The effective path coefficient $\beta_l$ represents the \emph{beamformed
effective gain} of the $l$-th propagation path after coherent combining
across all $M$ antennas. Substituting $\mathbf{h}(t)$
into~\eqref{eq:received_signal} yields
\begin{equation}
  y(t)
  = s\sum_{l=1}^{L}
    \frac{\alpha_l}{\sqrt{L}}
    \!\left(\sum_{m=1}^{M}
    w_m\,e^{j\frac{2\pi}{\lambda}p_{0,m}\cos\theta_l}\right)
    e^{j2\pi f_{{\rm D},l}t}
    +n(t),
  \label{eq:y_expanded}
\end{equation}
where the inner summation over $m$ arises from the beamforming operation.
Defining
\begin{equation}
  \beta_l
  \triangleq
  \frac{\alpha_l}{\sqrt{L}}
  \sum_{m=1}^{M}
  w_m\,e^{j\frac{2\pi}{\lambda}p_{0,m}\cos\theta_l}
  \in\mathbb{C},
  \label{eq:beta}
\end{equation}
the received signal is written compactly as
\begin{equation}
  y(t) = s\sum_{l=1}^{L}\beta_l\,e^{j2\pi f_{{\rm D},l} t} + n(t).
  \label{eq:y_exact}
\end{equation}
The velocity $v$ is selected from a pre-designed codebook, introduced in
Section~\ref{sec:sigmodel}, where it is denoted $v_n$ to indicate the
chosen index. The Doppler shift $f_{{\rm D},l}$ is then written as
$f_{{\rm D},l,n} = v_n\cos\theta_l/\lambda$ throughout the remainder of
this paper.

\subsection{VIM-MA Signal Model and Velocity Codebook}
\label{sec:sigmodel}

\subsubsection{Bit Splitting and Velocity Codebook Lookup}

In each transmission slot the incoming bit stream $\mb{b}$ is split by a
serial-to-parallel (S/P) converter into two sub-streams. The first sub-stream
$\mb{b}_1$ contains $\log_2 M_s$ bits and is mapped onto an $M_s$-ary QAM
symbol $s\in\mathcal{S}$ with $\mathbb{E}[|s|^2]=P$. The second sub-stream
$\mb{b}_2$ contains $\lfloor\log_2 N_v\rfloor$ bits and selects a velocity
$v_n$ from the pre-designed codebook
\begin{equation}
  \mathcal{V} = \{v_1,v_2,\ldots,v_{N_v}\},
  \quad v_n = (n-1)\delta,\quad n=1,\ldots,N_v,
  \label{eq:codebook}
\end{equation}
via a lookup table, where $\delta > 0$ is the
uniform velocity spacing. The codebook is fully characterized by two design
parameters: the number of velocity levels $N_v$ and the spacing $\delta$.
All $M$ MAs move simultaneously at the selected velocity $v_n$ throughout
the slot, so the generic velocity $v$ of Section~\ref{sec:channel} is
identified as $v = v_n$, and the Doppler shift of~\eqref{eq:doppler_def}
becomes
\begin{equation}
  f_{{\rm D},l,n} \triangleq \frac{v_n\cos\theta_l}{\lambda}.
  \label{eq:doppler_n}
\end{equation}
The received signal~\eqref{eq:y_exact} is therefore a superposition of $L$
complex exponentials at frequencies $\{f_{{\rm D},l,n}\}_{l=1}^{L}$.
Although each path experiences a distinct Doppler shift due to its AoD
$\theta_l$, all path Dopplers are jointly determined by the common velocity
$v_n$, which enables velocity-index detection from the overall Doppler
structure. Thus, the spectral efficiency (SE) is
\begin{equation}
  R = \log_2 M_s + \left\lfloor\log_2 N_v\right\rfloor \quad\text{[bpcu]},
  \label{eq:SE}
\end{equation}
where bpcu stands for bits per channel use.

Two properties of~\eqref{eq:SE} are worth making explicit. \emph{(i) Floor operation} : The bit-splitting rule delivers
$\lfloor\log_2 N_v\rfloor$ index bits. Maximising the unfloored quantity would
overstate the rate by up to one bit and would hide a free design margin: once
$N_v^\star$ exceeds a power of two, the \emph{largest} spacing that still
supports $B=\lfloor\log_2 N_v^\star\rfloor$ bits is
\begin{equation}
  \delta_{\mathrm{rel}}
  =\frac{D_{\max}}{(2^{B}-1)\,T_s}\;\ge\;\delta^\star ,
  \label{eq:delta_relaxed}
\end{equation}
which delivers exactly the same payload with a strictly larger separation and
hence a strictly lower index-error probability. Design should therefore report
$(\delta_{\mathrm{rel}},2^{B})$ rather than $(\delta^\star,N_v^\star)$.

\emph{(ii) Rate versus throughput} : Equation~\eqref{eq:SE} counts
\emph{transmitted} bits and is blind to detection errors, so it cannot by
itself substantiate a claim of superiority. We therefore report the effective
throughput
\begin{equation}
  R_{\mathrm{eff}}
  =\left(1-P_{\mathrm{e}}^{\mathrm{idx}}\right)\!\left\lfloor\log_2 N_v\right\rfloor
  +\left(1-P_{\mathrm{e}}^{\mathrm{sym}}\right)\log_2 M_s ,
  \label{eq:Reff}
\end{equation}
and benchmark all schemes at a matched target error rate in
Section~\ref{sec:sim}. An index error also affects the symbol bits: since the QAM
symbol is demodulated conditionally on $\hat n$, a velocity-index error
generally corrupts the symbol decision as well, so~\eqref{eq:Reff} is itself
optimistic unless the second term is conditioned on $\hat n = n$.

\subsubsection{Received Signal Model}

With $v=v_n$ identified from the codebook~\eqref{eq:codebook}, the
received signal~\eqref{eq:y_exact} becomes
\begin{equation}
  y(t) = s\sum_{l=1}^{L}\beta_l\,e^{j2\pi f_{{\rm D},l,n} t} + n(t),
  \label{eq:y_n}
\end{equation}
where $f_{{\rm D},l,n}$ is defined in~\eqref{eq:doppler_n}.
Different velocity indices produce distinct sets of Doppler frequencies,
enabling the receiver to decode $\mb{b}_2$ from the Doppler content of
the received signal.

The receiver takes $N$ uniform samples over the slot at $t_k=kT_s/N$,
$k=0,\ldots,N-1$, stacked into $\mb{y}\in\mathbb{C}^N$. Each path then
contributes a complex exponential sampled at $N$ equally spaced points, so it
is convenient to define the \emph{Doppler steering vector}
\begin{equation}
  \mb{a}(f) \triangleq \left[1,\,e^{j2\pi f\frac{T_s}{N}},\,\ldots,\,
  e^{j2\pi f\frac{(N-1)T_s}{N}}\right]^T \in\mathbb{C}^N,
  \label{eq:steering}
\end{equation}
so that sampling~\eqref{eq:y_n} and stacking yields
\begin{equation}
  \mb{y} = s\sum_{l=1}^{L}\beta_l\,\mb{a}(f_{{\rm D},l,n}) + \mb{n},
  \label{eq:rx}
\end{equation}
where $\mb{n}\sim\mathcal{CN}(\mb{0},N_0\mb{I}_N)$.
Different velocity indices produce different Doppler frequencies and hence
different steering vectors, making the choices distinguishable in $\mb{y}$.
Since $\|\mb{a}(f)\|^2=N$ and cross-path terms vanish in expectation over
i.i.d.\ $\alpha_l$, the signal power per sample is
$P\sum_l\E[|\beta_l|^2]$ with $P=\E[|s|^2]$, against a noise power $N_0$. The
average received SNR per sample is therefore
\begin{equation}
  \mathrm{SNR} \triangleq
  \frac{P \sum_{l=1}^{L}\mathbb{E}_{\alpha_l,\theta_l}[|\beta_l|^2]}{N_0},
  \label{eq:SNR_def}
\end{equation}
where the joint expectation is taken over the random path gains $\alpha_l$ and
AoDs $\theta_l$; the resulting $\mathrm{SNR}$ is a deterministic
system parameter derived in Appendix~\ref{app:crlb}.

Since $\mathrm{SNR}$ is defined \emph{per sample}
while the $N$ samples span a fixed slot, increasing $N$ increases the total
collected energy in proportion, which is why the bound of
Section~\ref{sec:crlb} decays as $N/(N^2-1)\approx1/N$. Were the waveform
oversampled beyond its Nyquist rate the extra samples would carry no new
information. We therefore treat $N$ as the number of \emph{independent}
observations, $N\le\lceil WT_s\rceil$ with $W$ the receiver noise bandwidth,
and fix $W=64$~kHz so that the baseline $N=64$ sits exactly at Nyquist. Any
``temporal processing gain'' below is an energy-accumulation gain.

The receiver estimates the velocity index $n$ from $\mb{y}$ and recovers
$\hat{\mb{b}}_2$; it then demodulates the QAM symbol to recover
$\hat{\mb{b}}_1$.

\subsection{CRLB on Velocity Estimation}
\label{sec:crlb}

We assume that conventional QAM symbol detection is reliable and focus on
the design and detection of velocity indices. The $L$ distinguishable paths
are resolved into orthogonal observation branches, and each $\theta_l$ is
treated as known at the receiver (or, equivalently, the following is an
oracle bound conditioned on the AoD). Without this assumption, the product
$v_n\cos\theta_l$ does not identify $v_n$ separately from an unknown
$\theta_l$ in a single path.

This assumption is essential, because on its own it makes the bound
unattainable. The non-identifiability is not confined to a single path: for
\emph{any} $v'>v$ the substitution
$\theta_l'=\arccos\!\big((v/v')\cos\theta_l\big)$, well defined for every $l$,
reproduces the entire Doppler set $\{v\cos\theta_l/\lambda\}_{l=1}^{L}$
exactly. With unconstrained AoDs the velocity is therefore unidentifiable from
one slot \emph{regardless of $L$ or of the SNR}, and~\eqref{eq:CRLB} is an
oracle bound rather than an achievable one. Identifiability is restored by the
statistical scattering model: under isotropic scattering the movement-induced
Doppler spectrum is supported on $[-v_n/\lambda,v_n/\lambda]$, so $v_n$ is
identifiable from the \emph{spread} of that spectrum without per-path angle
knowledge. We adopt this as the operating assumption of the paper and develop
it in Section~\ref{sec:detect}; \eqref{eq:CRLB} is retained throughout as an
\emph{oracle} benchmark that upper-bounds any AoD-free receiver.

For each propagation path $l$, we construct a $2\times2$ FIM for the
parameter vector $\bm{\xi}_l=[v_n,\,\phi_l]^T$, where
$\phi_l=\angle(\beta_ls)$ is an unknown composite phase and is eliminated
by the Schur complement. The QAM symbol is assumed detected; averaging its
conditional Fisher information over the constellation replaces $|s|^2$ by
$P=\mathbb{E}[|s|^2]$. Thus, define the symbol-averaged path power
$A_l^2\triangleq P|\beta_l|^2$. The conditional per-path CRLB derived in
Appendix~\ref{app:crlb} is
\begin{equation}
  \mathrm{CRLB}_l(v_n\mid\theta_l,\alpha_l)
  = \frac{6\lambda^2 N}
         {(2\pi\cos\theta_l\,T_s)^2(N^2-1)\cdot(A_l^2/N_0)}.
  \label{eq:crlb_cond2}
\end{equation}
For nonconstant-envelope QAM, the corresponding instantaneous expression is
obtained by replacing $P$ with the detected value $|s|^2$.

Because $|\beta_l|^2$ depends on $\theta_l$ through the beamformed spatial
response, its angular dependence must be retained. Define the
beamforming-weighted angular factor
\[
  \eta_{\mb{w}}
  \triangleq
  \frac{\sum_{l=1}^{L}\mathbb{E}[|\beta_l|^2\cos^2\theta_l]}
       {\sum_{l=1}^{L}\mathbb{E}[|\beta_l|^2]}.
\]
After summing the effective information of the orthogonally resolved paths,
averaging it over $\alpha_l$ and $\theta_l$, and then inverting, we obtain the
average-information CRB (AIF-CRB)
\begin{equation}
  \begin{aligned}
  \mathrm{CRB}_{\mathrm{AIF}}(v_n)
  &= \frac{6\lambda^2 N}
  {(2\pi T_s)^2(N^2-1)\cdot\mathrm{SNR}\cdot\eta_{\mb{w}}}\\
  &\approx
  \frac{12\lambda^2 N}
  {(2\pi T_s)^2(N^2-1)\cdot\mathrm{SNR}},
  \end{aligned}
  \label{eq:CRLB}
\end{equation}
where $\mathrm{SNR}$ is defined in~\eqref{eq:SNR_def}. The approximation
uses $\eta_{\mb{w}}\approx1/2$, which Appendix~\ref{app:crlb} establishes
under the large-separation spatial-response approximation. Importantly,
\eqref{eq:CRLB} is the inverse of expected effective information, not
$\mathbb{E}[\mathrm{CRLB}_l]$; it is used here as a statistical codebook-design
surrogate.

The associated minimum admissible velocity spacing is
\begin{equation}
  \begin{aligned}
  \delta_{\min}
  &\triangleq \gamma\sqrt{\mathrm{CRB}_{\mathrm{AIF}}(v_n)}\\
  &=\gamma\lambda\sqrt{\frac{6N}
  {(2\pi T_s)^2(N^2-1)\cdot\mathrm{SNR}\cdot\eta_{\mb{w}}}}\\
  &\approx
  \gamma\lambda\sqrt{\frac{12N}
  {(2\pi T_s)^2(N^2-1)\cdot\mathrm{SNR}}}.
  \end{aligned}
  \label{eq:dvmin}
\end{equation}
The offline codebook uses only the statistical factor $\eta_{\mb{w}}$ and
therefore does not require instantaneous individual AoDs or path gains at the
transmitter, although the conditional per-path derivation assumes the AoDs
are available at the receiver.

The AIF-CRB surrogate addresses estimation-domain resolvability: whether the receiver
can reliably distinguish $v_n$ from adjacent levels given the observation
noise. A second, independent requirement is \emph{channel-domain
distinguishability}: distinct velocity choices must produce sufficiently
different channel realizations.

The operational basis of this second requirement must be stated carefully. It
is \emph{not} that correlated end-of-slot channels prevent index detection:
the index is decoded from the Doppler content \emph{within} the slot
via~\eqref{eq:rx}, so distinct Doppler signatures give distinct received
signals however correlated the channels they leave behind at $t=T_s$. Rather,
\eqref{eq:C3} serves two purposes. First, the end-of-slot displacement sets the
channel seen by the \emph{next} slot, and decorrelation across codebook entries
bounds how much the post-slot channel depends on the transmitted index, which
is what allows a channel estimate to be reused. Second, following
\cite{huang2026MAIM}, index states mapping to nearly identical channel
realizations collapse the joint index--constellation and raise the error
probability even when the index is estimable; decorrelation is then a proxy for
minimum distance in the joint signal space. Remark~\ref{rem:c3inactive} shows
that, once the ambiguity constraint~\eqref{eq:C1b} is included, \eqref{eq:C3}
is not the binding constraint for a realizable receiver.
To characterize this, we evaluate the channel at the end of
the slot. After a slot of duration $T_s$, the $m$-th MA has displaced from
$p_{0,m}$ to $p_{0,m}+vT_s$, giving the end-of-slot channel coefficient
\begin{equation}
  h_m(T_s;\,v) = \frac{1}{\sqrt{L}}\sum_{l=1}^{L}\alpha_l\,
  e^{j\frac{2\pi}{\lambda}p_{0,m}\cos\theta_l}
  e^{j\frac{2\pi}{\lambda}v T_s\cos\theta_l},
  \label{eq:hm_end}
\end{equation}
which is the channel coefficient at the displaced position
$p_{0,m}+vT_s$. The \emph{spatial velocity correlation} between the
channel responses induced by two velocity values $v$ and $v'$ is defined
as $R_h(v,v')\triangleq\mathbb{E}[h_m(T_s;v)\,h_m^*(T_s;v')]$.

\begin{proposition}[Spatial velocity correlation]
\label{prop:corr}
Under the field-response model~\eqref{eq:hm_t} with
$\alpha_l\sim\mathcal{CN}(0,1)$ i.i.d.\ and AoDs
i.i.d.\ $\mathcal{U}(0,\pi)$, the spatial velocity correlation averaged
over both the path gains and the AoDs is, for every finite $L$, exactly
\begin{equation}
  R_h(v,v') = J_0\!\left(\frac{2\pi(v-v')T_s}{\lambda}\right),
  \label{eq:Rh_J0}
\end{equation}
which depends only on the velocity difference $v-v'$ and is independent
of the initial positions $\{p_{0,m}\}$. If instead the AoD realization
$\{\theta_l\}$ is held fixed and only the gains are averaged, the empirical
correlation $\frac{1}{L}\sum_l e^{j2\pi(v-v')T_s\cos\theta_l/\lambda}$
converges almost surely to~\eqref{eq:Rh_J0} as $L\to\infty$, with
$O(L^{-1/2})$ fluctuations.
\end{proposition}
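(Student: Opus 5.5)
Substitute the end-of-slot coefficient~\eqref{eq:hm_end} into the definition of $R_h(v,v')$ and expand the product as a double sum over path pairs $(l,l')$. The common factor $1/L$ comes out front. Each summand is $\alpha_l\alpha_{l'}^*$ multiplied by a phase term that depends only on the AoDs $\theta_l,\theta_{l'}$, the initial position $p_{0,m}$, and the velocities.

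The first step is to condition on the AoDs and average over the gains. Because the $\alpha_l$ are independent and $\mathcal{CN}(0,1)$, we have $\mathbb{E}[\alpha_l\alpha_{l'}^*]=\delta_{ll'}^{\mr{K}}$, so every cross term vanishes. On the diagonal $l=l'$, the position-dependent phases $e^{\pm j2\pi p_{0,m}\cos\theta_l/\lambda}$ cancel exactly. This leaves
\[
\frac{1}{L}\sum_{l=1}^{L}e^{j2\pi(v-v')T_s\cos\theta_l/\lambda},
\]
which is the empirical correlation in the second claim. It already shows that the result is independent of $p_{0,m}$ and depends only on $v-v'$. This conditioning order matters: the gains must be averaged before the AoDs, and independence of the gains from the AoDs is what makes that legitimate.

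The second step averages over the i.i.d.\ uniform AoDs. Each of the $L$ terms has the same expectation
\[
\frac{1}{\pi}\int_0^\pi e^{jx\cos\theta}\,d\theta,\qquad x=\frac{2\pi(v-v')T_s}{\lambda}.
\]
So the $1/L$ factor exactly cancels the sum of $L$ identical terms, and the identity holds for every finite $L$. To evaluate the integral, I will split it around $\theta=\pi/2$. Under $\theta\mapsto\pi-\theta$ the odd part $j\sin(x\cos\theta)$ integrates to zero, leaving $\frac{1}{\pi}\int_0^\pi\cos(x\cos\theta)\,d\theta$. This is the standard Bessel integral representation of $J_0(x)$, which gives~\eqref{eq:Rh_J0}.

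The third step handles fixed AoDs with $L\to\infty$. The summands $Z_l=e^{jx\cos\theta_l}$ are i.i.d. and bounded with $|Z_l|=1$, with mean $J_0(x)$. The strong law of large numbers then gives almost-sure convergence of the empirical average to $J_0(x)$. For the fluctuation rate, compute the variance: $\mathrm{Var}(Z_l)=1-J_0(x)^2\le 1$, so the variance of the average is $(1-J_0^2(x))/L$. The central limit theorem, or Chebyshev's inequality, then gives $O(L^{-1/2})$ fluctuations.

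There is no deep obstacle here. The main care points are the conditioning order noted in the first step and the correct identification of the Bessel integral with the symmetry argument that removes the imaginary part.
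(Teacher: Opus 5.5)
Your proof is correct and follows the paper's route. You collapse the double sum with $\mathbb{E}[\alpha_l\alpha_{l'}^*]=\delta_{ll'}^{\mr{K}}$ and then evaluate a Bessel integral; the paper substitutes $u=\cos\theta$ to get the arcsine-weighted form $\int_{-1}^{1}\frac{1}{\pi}\frac{e^{jxu}}{\sqrt{1-u^2}}\,du$, whereas you use the $\theta\mapsto\pi-\theta$ symmetry and the standard representation of $J_0$, which is equivalent.

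Your version is more complete than the appendix. The appendix writes ``as $L\to\infty$'' before the integral and never proves the almost-sure convergence or the $O(L^{-1/2})$ rate, while your linearity-of-expectation step gives the exact finite-$L$ identity and your SLLN plus variance $(1-J_0^2(x))/L$ settles the second claim. One minor point: the averaging order is actually immaterial by Fubini, and only the independence of the gains from the AoDs is used.
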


\noindent The two readings must not be conflated: the exact statement is
an ensemble average over AoDs, whereas the asymptotic statement is the
self-averaging of a single realization. At $L=4$ the fluctuation about $J_0$ is
roughly $50\%$, so $d_c$ is only loosely predictive there; all detection and
throughput results in Sections~\ref{sec:simdet}--\ref{sec:simcomp} therefore
use $L=32$, for which the residual fluctuation is below $18\%$.
\noindent\emph{Proof.} See Appendix~\ref{app:volcorr}.

Requiring $|R_h(v_n,v_n-\delta)|\leq\rho_{\max}$, so that adjacent
codebook entries produce sufficiently decorrelated channels, is
equivalent to $\delta T_s\geq d_c$, where $d_c$ satisfies
$|J_0(2\pi d_c/\lambda)|=\rho_{\max}$. To see this: since
$R_h(v,v')=J_0(2\pi(v-v')T_s/\lambda)$ depends only on the velocity
difference, $|R_h(v_n,v_n-\delta)|=|J_0(2\pi\delta T_s/\lambda)|$.
Since $J_0(\cdot)$ is monotonically decreasing from 1 near the origin,
$|J_0(2\pi\delta T_s/\lambda)|\leq\rho_{\max}$ is equivalent to
$\delta T_s\geq d_c$, where $d_c$ is defined by
$|J_0(2\pi d_c/\lambda)|=\rho_{\max}$.
Together with $\delta_{\min}$
from~\eqref{eq:dvmin}, the two resolvability conditions jointly determine
the minimum admissible velocity spacing in the codebook design problem
of Section~\ref{sec:convex}.

\section{Problem Formulation and Proposed Solution}
\label{sec:convex}

\subsection{Problem Formulation}
\label{sec:problem}

We now address the central design question: given a track of length
$D_{\max}$ and fixed transmit power $P$, how should the velocity codebook
$\mathcal{V}$ be chosen to maximise the SE? Two independent lower bounds act on the velocity spacing: the
AIF-CRB-based estimation bound \eqref{eq:C1} and the
correlation-based channel-separation bound \eqref{eq:C3}.

The SE of VIM-MA is $R = \log_2 M_s + \log_2 N_v$ bpcu.  Since $M_s$ is
fixed by the QAM order, maximising $R$ is equivalent to maximising
$\log_2 N_v$ over $(\delta, N_v)$, subject to three physical
requirements.  First, the velocity levels must be mutually resolvable by
the receiver: from the AIF-CRB design
surrogate~\eqref{eq:CRLB} over the $L$-path
model with i.i.d.\ random AoDs, the minimum admissible velocity spacing
is given in closed form by~\eqref{eq:dvmin}. It depends on
$\gamma$, $T_s$, $N$, $\lambda$, and the operating SNR, but requires no
per-path AoD knowledge. Here this means no instantaneous
per-path AoD is needed for transmitter-side codebook design; the conditional
receiver-side bound assumes resolved paths and receiver-available AoDs. Second, the total antenna displacement must not exceed the
track length.  Third, distinct velocity choices must produce sufficiently
decorrelated spatial channels, as characterised by
Proposition~\ref{prop:corr}: since $R_h(v,v')$ depends only on
the velocity difference, $|R_h(v_n, v_n-\delta)| = |J_0(2\pi\delta T_s/\lambda)|$.
Requiring this below $\rho_{\max}$ gives $|J_0(2\pi\delta T_s/\lambda)|\leq\rho_{\max}$.
Since $J_0(\cdot)$ is a decreasing function near the origin and
$d_c$ is defined as the smallest positive value satisfying
$|J_0(2\pi d_c/\lambda)|=\rho_{\max}$, the condition is equivalent to
$\delta T_s \geq d_c$, i.e., $\delta \geq d_c/T_s$.  Collecting these requirements,
the codebook design problem is:
\begin{align}
  \underset{\delta,\;N_v}{\mathrm{maximize}}\quad
  & \log_2 N_v
  \label{eq:P0}\\
  \mathrm{s.t.}\quad
  &\delta \geq \gamma\sqrt{\mathrm{CRB}_{\mathrm{AIF}}(v_n)}
  \triangleq \delta_{\min}, 
  \tag{C1}\label{eq:C1}\\
  & (N_v-1)\,\delta T_s \leq D_{\max},
  \tag{C2}\label{eq:C2}\\
  & \delta \geq d_c/T_s,
  \tag{C3}\label{eq:C3}\\
  & N_v \in \mathbb{Z}^+,\;\delta > 0.
  \tag{C4}\label{eq:C4}\\ \nonumber
\end{align}

where $\gamma>0$ is a resolvability safety margin.
Constraint~\eqref{eq:C1} ensures resolvability of adjacent codebook entries via
the AIF-CRB~\eqref{eq:CRLB}, in which the $L$ per-path
Fisher-information contributions are averaged first and then inverted, so the
design constraint requires no individual AoDs or path gains.
Constraint~\eqref{eq:C2} confines the total displacement to the physical track,
and~\eqref{eq:C3} enforces spatial decorrelation across codebook entries via
Proposition~\ref{prop:corr}. Thus \eqref{eq:C1} and \eqref{eq:C3} express
fundamentally different limitations: estimation accuracy and channel
separability.

Section~\ref{sec:detect} adds a fourth constraint~\eqref{eq:C1b}, an
ambiguity-based separation bound of which~\eqref{eq:C1} is the high-SNR local
approximation. Because it is again a lower bound on $\delta$, it merges into
$q_{\min}$ without altering the structure of the problem or its closed-form
solution.

Problem~\eqref{eq:P0} is non-convex for two reasons: the codebook size $N_v$
is integer-valued, making~\eqref{eq:P0} a mixed-integer program; and the
product $(N_v-1)\delta$ in~\eqref{eq:C2} couples the two variables
bilinearly. We address both in the following section.

\subsection{Proposed Solution}
\label{sec:solution}

The closed-form optimal solution to Problem~\eqref{eq:P0} follows directly
from the structure of the constraints. Since the objective $\log_2 N_v$ is
strictly increasing in $N_v$, one should maximise $N_v$. For a fixed $\delta$,
C2 gives the maximum feasible $N_v = \lfloor 1 + D_{\max}/(\delta T_s)\rfloor$,
which is decreasing in $\delta$. Therefore $N_v$ is maximised by choosing
$\delta$ as small as possible. The smallest $\delta$ satisfying both \eqref{eq:C1} and
C3 simultaneously is $\delta^* = \max(\delta_{\min}, d_c/T_s)$, giving the
closed-form solution:
\begin{align}
  \delta^\star &= \max\!\left(\delta_{\min},\;\frac{d_c}{T_s}\right),
  \label{eq:dvstar}\\
  N_v^\star &= \left\lfloor\,
    1 + \frac{D_{\max}}{\delta^\star T_s}\,
  \right\rfloor,
  \label{eq:Nvstar}
\end{align}
where $\delta_{\min}$ is given by~\eqref{eq:dvmin} and depends on the
observation length $N$ and the operating SNR, with no dependence on
individual path AoDs.

\begin{remark}[Binding constraint and SNR scaling]
The optimal spacing $\delta^\star = \max(\delta_{\min}, d_c/T_s)$ is
determined by whichever of \eqref{eq:C1} or \eqref{eq:C3} is more stringent. At high SNR,
$\delta_{\min}\propto\mathrm{SNR}^{-1/2}\to 0$, so the spatial decorrelation
bound $d_c/T_s$ becomes binding and $N_v^\star$ saturates at the
  spatial-diversity limit $\lfloor 1+D_{\max}/d_c\rfloor$. At low SNR,
Doppler resolvability dominates and $N_v^\star$ grows as
$\mathrm{SNR}^{1/2}$.
\end{remark}

\textit{Global optimality via convex reformulation.}
That~\eqref{eq:dvstar}--\eqref{eq:Nvstar} are globally optimal follows by
transforming the continuous relaxation of Problem~\eqref{eq:P0} into a linear
program. For the nontrivial case $N_v>1$ set $r=\ln(N_v-1)$ and
$q=\ln(\delta T_s)$, so that $N_v=1+e^r$ and $\delta T_s=e^q$. Since
$\log_2(1+e^r)$ is strictly increasing in $r$, maximising the original
objective is equivalent to maximising the affine objective $r$; and the
substitution converts any lower bound $\delta\geq a$ into the affine bound
$q\geq\ln(aT_s)$. Constraints~\eqref{eq:C1} and~\eqref{eq:C3} thus collapse to
\begin{equation}
  q \geq q_{\min}\triangleq\max\{\ln(\delta_{\min}T_s),\,\ln d_c\},
  \tag{C1,C3$'$}\label{eq:Pc13}
\end{equation}
while the track-length constraint~\eqref{eq:C2} becomes
$r+q\leq\ln D_{\max}$, which is affine in $(r,q)$. The continuous relaxation is
therefore the linear program $\max\{r:\,q\geq q_{\min},\;r+q\leq\ln D_{\max}\}$,
whose optimum is $q^\star=q_{\min}$, $r^\star=\ln D_{\max}-q_{\min}$, i.e.
\begin{equation}
\begin{aligned}
\delta^\star=\frac{e^{q_{\min}}}{T_s}
  =\max\!\left(\delta_{\min},\frac{d_c}{T_s}\right),
\quad
N_{v,\mathrm{cont}}^\star
=1+\frac{D_{\max}}{\delta^\star T_s}.
\end{aligned}
\end{equation}
Since $N_v$ is a positive integer and the objective increases in $N_v$, taking
the largest feasible integer gives
$N_v^\star=\lfloor N_{v,\mathrm{cont}}^\star\rfloor$, recovering
\eqref{eq:dvstar}--\eqref{eq:Nvstar} and covering the degenerate one-level case
when the floor equals one.

\section{Velocity-Index Detection and Index-Error Probability}
\label{sec:detect}

Because the advantage of VIM-MA rests on resolving velocity levels spaced more
finely than competing schemes' index resources, an explicit detector and a
matching error analysis are required, as in comparable index-modulation
work~\cite{zhu2024FAIM,faimMIMO,huang2026MAIM}. This section implements both.

\subsection{Joint ML Detector and Its Limitation}

Conditioned on the effective path coefficients $\{\beta_l\}$ and AoDs
$\{\theta_l\}$, the joint ML rule follows from~\eqref{eq:rx}:
\begin{equation}
  (\hat n,\hat s)
  =\arg\min_{n\in\{1,\dots,N_v\},\,s\in\mathcal{S}}
  \Big\|\mb{y}-s\sum_{l=1}^{L}\beta_l\,\mb{a}(f_{{\rm D},l,n})\Big\|^2 ,
  \label{eq:jointML}
\end{equation}
at a cost of $O(N_vM_sLN)$ complex operations per slot. Rule
\eqref{eq:jointML} requires both $\{\beta_l\}$ and $\{\theta_l\}$ at the
receiver and is therefore an oracle benchmark, subject to the identifiability
caveat of Section~\ref{sec:crlb}.

\subsection{Two-Stage AoD-Free Detector}

We now give a detector that needs no per-path angle knowledge. Under the
isotropic scattering model of~\cite{clarke1968,jakes1974} adopted
in~\eqref{eq:hm_t}, the movement-induced Doppler power spectrum of $y(t)$
conditioned on the transmitted level $v_n$ is the Clarke--Jakes spectrum
\begin{equation}
  S_n(f)=\frac{1}{\pi f_{m,n}\sqrt{1-(f/f_{m,n})^2}},
  \qquad |f|<f_{m,n}\triangleq\frac{v_n}{\lambda},
  \label{eq:jakes}
\end{equation}
whose support is set by $v_n$ \emph{alone}. The velocity is thus identifiable
from the spectral spread. Using the second moment of~\eqref{eq:jakes},
$\int f^2S_n(f)\,df=f_{m,n}^2/2$, gives the estimator
\begin{equation}
  \hat v=\sqrt{2}\,\lambda\,\hat\sigma_f,
  \qquad
  \hat\sigma_f^{2}
  =\frac{\sum_{k}f_k^{2}\big(|Y_k|^{2}-\hat N_0\big)^{+}}
        {\sum_{k}\big(|Y_k|^{2}-\hat N_0\big)^{+}},
  \label{eq:spreadest}
\end{equation}
where $Y_k$ is the $N$-point DFT of $\mb{y}$, $f_k$ the associated bin
frequency, $\hat N_0$ an estimate of the noise floor, and $(\cdot)^+$ denotes
$\max(\cdot,0)$. The AoD-free index decision is
\begin{equation}
  \hat n=\arg\min_{n}|\hat v-v_n|.
  \label{eq:twostage}
\end{equation}
If the QAM symbol is also demodulated in the same slot, a
conventional coherent decision can be appended after the index has been
selected,
\begin{equation}
  \hat s=\arg\min_{s\in\mathcal{S}}
  \big\|\mb{y}-s\,\hat{\mb{u}}_{\hat n}\big\|^2,
\end{equation}
where $\hat{\mb{u}}_{\hat n}\in\mathbb{C}^N$ is a pilot-estimated
effective temporal waveform for velocity level $\hat n$, including the
unknown path gains and phases. Thus the velocity-index detector itself needs
neither AoD nor path-gain knowledge; the template $\hat{\mb{u}}_{\hat n}$ is
only the usual CSI required if coherent QAM detection is performed
afterwards. The index-stage complexity is $O(N\log N+N_v)$, and including the
optional symbol search gives $O(N\log N+N_v+M_s)$, i.e.\ lower than
\eqref{eq:jointML} by a factor $O(N_vM_sL/\log N)$.

\subsection{Covariance-Matched Detector}

The moment estimator uses only the second spectral moment. The
statistically efficient use of the same AoD-free information is to match the
full \emph{shape} of the Clarke--Jakes spectrum. Marginalising the path gains
and AoDs, the received vector conditioned on level $n$ is, for large $L$,
zero-mean circularly symmetric Gaussian with the Toeplitz covariance
\begin{equation}
  \begin{aligned}
  [\mb{C}_n]_{k,k'}
  &=
  P\sum_{l=1}^{L}\E\!\left[
  |\beta_l|^2 e^{j2\pi v_n\cos\theta_l T_s(k-k')/(\lambda N)}
  \right]
  + N_0\,\delta_{kk'}^{\mr{K}} \\
  &\approx
  P G_{\mb{w}}\,
  J_0\!\left(\frac{2\pi v_n T_s (k-k')}{\lambda N}\right)
  + N_0\,\delta_{kk'}^{\mr{K}},
  \end{aligned}
  \label{eq:covbank}
\end{equation}
where
$G_{\mb{w}}\triangleq\sum_{l=1}^{L}\E[|\beta_l|^2]$ is the average effective
channel power, equal to one under the normalization used in the simulations.
The Bessel approximation follows from~\eqref{eq:hm_t} by the same integral
identity as Proposition~\ref{prop:corr} when the beamformed path power is
effectively angle-independent; otherwise the first line of~\eqref{eq:covbank}
is the exact covariance. The index decision is then the Gaussian
covariance-matching rule
\begin{equation}
  \hat n=\arg\min_{n}
  \Big\{\ln\det\mb{C}_n+\mb{y}^H\mb{C}_n^{-1}\mb{y}\Big\}.
  \label{eq:covml}
\end{equation}
Rule~\eqref{eq:covml} requires neither the AoDs nor the path gains nor the
transmitted symbol: index detection is \emph{non-coherent} and needs no CSI at
all. With the $N_v$ inverses and log-determinants precomputed offline, the
per-slot cost is $O(N_vN^2)$, reducible to $O(N_vN\log N)$ by circulant
approximation of $\mb{C}_n$. Table~\ref{tab:complexity} summarises the three
detectors.

\begin{table}[t]
\centering
\caption{Per-slot detection complexity and side information required.}
\label{tab:complexity}
\renewcommand{\arraystretch}{1.2}
\footnotesize
\begin{tabular}{|l|c|c|}
\hline
\textbf{Detector} & \textbf{Complexity} & \textbf{Side information}\\
\hline
Joint ML~\eqref{eq:jointML}      & $O(N_vM_sLN)$ & $\{\beta_l\},\{\theta_l\}$ \\
Covariance ML~\eqref{eq:covml}   & $O(N_vN^2)$   & none \\
Doppler spread~\eqref{eq:spreadest} & $O(N\log N+N_v)$ & noise floor $N_0$ \\
\hline
\end{tabular}
\end{table}

Two limitations must be reported. First,~\eqref{eq:spreadest} is consistent
only when the empirical AoD distribution has converged: for finite $L$ its
conditional mean is $\lambda^2v_n^2\frac{1}{L}\sum_l\cos^2\theta_l$ rather
than $v_n^2/2$, giving an $O(L^{-1/2})$ relative bias, which equals
approximately $50\%$ at the simulated $L=4$. Second, the estimator uses only the second spectral moment
and is therefore not efficient; it does not attain~\eqref{eq:CRLB}.
Both limitations are quantified in
Fig.~\ref{fig:valid}(b): the spread estimator exhibits an SNR-independent error
floor of relative size $O(L^{-1/2})$, and even the covariance-matched detector
of~\eqref{eq:covml} remains more than an order of magnitude above the oracle
bound. A fine-grid ML estimator that implicitly knows the AoDs would compare
favourably against an oracle bound without establishing that any implementable
receiver attains $\delta_{\min}$; Fig.~\ref{fig:valid}(b) supplies the achievable
counterpart.

\subsection{Ambiguity-Based Separation Constraint}

Constraint~\eqref{eq:C1} uses $\gamma\sqrt{\mathrm{CRB}_{\mathrm{AIF}}}$ as a
resolvability proxy, which is a \emph{local} criterion: the CRB describes the
likelihood curvature at the true parameter and is valid only above the
estimation threshold, whereas the spacing between hypotheses is precisely the
regime in which outlier errors dominate. This threshold effect is
classical in frequency estimation~\cite{rife1974,chazan1975} and is visible in
Fig.~\ref{fig:iep}. Using a local bound to set the global hypothesis
spacing is therefore not self-consistent.

The globally correct criterion is the pairwise distance between hypotheses.
With $\chi_N(\Delta f)\triangleq\sum_{k=0}^{N-1}e^{j2\pi\Delta fkT_s/N}$ the
Dirichlet kernel, and with
$\mathrm{sinc}(x)=\sin(\pi x)/(\pi x)$, we use
$|\chi_N(\Delta f)|/N\approx|\mathrm{sinc}(\Delta fT_s)|$. The expected
squared distance between levels $n,n'$ is
\begin{equation}
  d^2(n,n')
  =2PN\!\sum_{l=1}^{L}\!\E\big[|\beta_l|^2
  \big(1-|\mathrm{sinc}((v_n\!-\!v_{n'})\cos\theta_l T_s/\lambda)|\big)\big],
  \label{eq:pairdist}
\end{equation}
so that the index error probability obeys the union bound
\begin{equation}
  P_{\mathrm{e}}^{\mathrm{idx}}
  \le\frac{1}{N_v}\sum_{n}\sum_{n'\neq n}
  Q\!\left(\sqrt{\frac{d^2(n,n')}{2N_0}}\right).
  \label{eq:unionbound}
\end{equation}
Here $Q(\cdot)$ is the standard Gaussian tail function.
Since $\mathrm{sinc}(x)$ first vanishes at $x=1$, \eqref{eq:pairdist} is small,
and the levels are nearly collinear, whenever
$\Delta v\triangleq|v_n-v_{n'}|\ll\lambda/(T_s|\cos\theta_l|)$.
We therefore add
\begin{equation}
  \delta\;\ge\;\delta_{\mathrm{amb}},
  \tag{C1b}\label{eq:C1b}
\end{equation}
with $\delta_{\mathrm{amb}}$ the smallest spacing for which the
right-hand side of~\eqref{eq:unionbound} is no larger than a prescribed index
reliability target $\epsilon_{\mathrm{idx}}$, i.e.,
$P_{\mathrm{e}}^{\mathrm{idx}}\le\epsilon_{\mathrm{idx}}$. We retain
\eqref{eq:C1} as its high-SNR local approximation. Because
\eqref{eq:C1b} is again a lower bound on $\delta$, it enters
$q_{\min}$ in~\eqref{eq:Pc13} unchanged and the closed-form solution of
Section~\ref{sec:solution} carries over verbatim with
$\delta^\star=\max(\delta_{\min},\delta_{\mathrm{amb}},d_c/T_s)$.

\begin{remark}[Super-resolution is structurally unavoidable]
\label{rem:superres}
The Rayleigh velocity resolution over an observation of duration $T_s$ is
$\Delta v_{\mathrm{Ray}}=\lambda/T_s$. Let $d_c$ be the smallest positive
root of $|J_0(2\pi d/\lambda)|=\rho_{\max}$. Because the first zero of $J_0$
occurs at $2.405$, every such root satisfies
$d_c\le 2.405\lambda/(2\pi)=0.383\lambda$, \emph{for every}
$\rho_{\max}\in[0,1)$. Hence whenever~\eqref{eq:C3} is the binding
constraint,
\begin{equation}
  \frac{\delta^\star}{\Delta v_{\mathrm{Ray}}}
  =\frac{d_c}{\lambda}\le 0.383 ,
\end{equation}
so the decorrelation-limited codebook always packs velocity levels at least
$2.61\times$ more finely than the Rayleigh limit; at the baseline
$\rho_{\max}=0.5$ the factor is $\lambda/d_c=4.13$. VIM-MA is therefore
\emph{intrinsically} a super-resolution scheme, independent of carrier
frequency, slot duration and track length.
\end{remark}

\begin{remark}[Identification of the binding constraint]
\label{rem:c3inactive}
Write the ambiguity-limited spacing of~\eqref{eq:C1b} as
$\delta_{\mathrm{amb}}=\kappa\lambda/T_s$, where $\kappa$ is the fraction of
the Rayleigh resolution at which a given detector meets the target index-error
rate. By Remark~\ref{rem:superres}, $d_c\le0.383\lambda$ for every
$\rho_{\max}$. Hence~\eqref{eq:C3} is active only if $\kappa<d_c/\lambda$, and
is otherwise inactive. Section~\ref{sec:simdet} measures $\kappa$ for both
receivers at $\epsilon_{\mathrm{idx}}=10^{-2}$:
\begin{itemize}
  \item \emph{oracle-AoD receiver:} $\kappa\approx0.24=d_c/\lambda$, so
  \eqref{eq:C3} is exactly the binding constraint and the codebook ceiling is
  $\lfloor1+D_{\max}/d_c\rfloor$;
  \item \emph{AoD-free receiver:} $\kappa\approx0.91$, so \eqref{eq:C1b} binds,
  \eqref{eq:C3} is inactive, and the ceiling falls to
  $\lfloor1+D_{\max}/(\kappa\lambda)\rfloor$.
\end{itemize}
For the baseline aperture $D_{\max}=10\lambda$ these give $42$ levels ($5$
bits) and $12$ levels ($3$ bits) respectively. The crossover
between estimation accuracy and spatial decorrelation therefore occurs only
when the receiver can use per-path AoD information; for the AoD-free detector,
ambiguity remains the active limit.
\end{remark}

Remark~\ref{rem:superres} has two implications. It identifies the origin of the
index bits of VIM-MA relative to Rayleigh-limited baselines, and it identifies
the assumption on which that advantage depends, since super-resolution by a
factor of at least $2.6$ is achievable only above an SNR threshold, with
accurate model order and negligible model mismatch. Index error probability, not rate alone, is
therefore the relevant performance measure.

\section{Mechanical Feasibility and Operating Envelope}
\label{sec:feas}

The design must establish explicitly whether the velocities prescribed by the
codebook can be produced by movable-antenna
hardware~\cite{matutorial,ning2024arch}. At the operating point assumed above
they cannot, by several orders of magnitude, and the remedy is to relocate that
operating point.

\begin{proposition}[Peak codebook velocity]
\label{prop:vmax}
At the track-limited optimum~\eqref{eq:Nvstar}, the largest velocity in the
codebook is
\begin{equation}
  v_{\max}=(N_v^\star-1)\delta^\star=\frac{D_{\max}}{T_s},
  \label{eq:vmax}
\end{equation}
independently of $\rho_{\max}$, SNR, $N$ and $\gamma$.
\end{proposition}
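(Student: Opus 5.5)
The plan is to read off the largest codebook entry from the uniform structure of~\eqref{eq:codebook} and then use the fact that the track-length constraint~\eqref{eq:C2} is active at the optimum. Since $v_n=(n-1)\delta$ with $\delta>0$, the codebook is increasing in $n$. Its largest element is therefore $v_{N_v}=(N_v-1)\delta$, and at the optimum this is $v_{\max}=(N_v^\star-1)\delta^\star$, which is the first equality in~\eqref{eq:vmax}.

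For the second equality I would invoke the convex reformulation of Section~\ref{sec:solution}. In the linear program $\max\{r:\,q\ge q_{\min},\,r+q\le\ln D_{\max}\}$ the objective is strictly increasing in $r$, so the constraint $r+q\le\ln D_{\max}$ must hold with equality at the optimum. This gives $r^\star+q^\star=\ln D_{\max}$, i.e.\ $(N_{v,\mathrm{cont}}^\star-1)\,\delta^\star T_s=D_{\max}$. Dividing by $T_s$ yields $v_{\max}=D_{\max}/T_s$.

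Independence from the parameters follows from where they enter. The quantities $\rho_{\max}$, SNR, $N$ and $\gamma$ appear only through $q_{\min}$, and hence only through $\delta^\star=\max(\delta_{\min},\delta_{\mathrm{amb}},d_c/T_s)$. That $\delta^\star$ cancels in the product $(N_v^\star-1)\delta^\star=\big(D_{\max}/(\delta^\star T_s)\big)\,\delta^\star$. Changing which of the lower-bound constraints binds trades spacing against the number of levels while leaving the product fixed.

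The main obstacle is the integer floor in~\eqref{eq:Nvstar}. With $N_v^\star=\lfloor 1+D_{\max}/(\delta^\star T_s)\rfloor$, one only gets
\begin{equation*}
D_{\max}/T_s-\delta^\star<(N_v^\star-1)\delta^\star\le D_{\max}/T_s ,
\end{equation*}
with equality exactly when $D_{\max}/(\delta^\star T_s)$ is an integer. I would handle this in two steps. First, state the identity exactly at the track-limited continuous optimum, where~\eqref{eq:C2} is active, and note that the integer codebook's peak falls short of it by less than one spacing $\delta^\star$. Second, observe that the design rule recommended after~\eqref{eq:delta_relaxed} restores exact equality: re-spacing the chosen number of levels to fill the track makes~\eqref{eq:C2} tight again. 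For $(\delta_{\mathrm{rel}},2^B)$ this gives $(2^B-1)\delta_{\mathrm{rel}}=D_{\max}/T_s$ identically. The same re-spacing applied to $N_v^\star$ levels, with $\delta=D_{\max}/((N_v^\star-1)T_s)\ge\delta^\star$, preserves feasibility of all the lower bounds on $\delta$. In every case the peak velocity equals $D_{\max}/T_s$, up to the sub-spacing rounding gap in the unrelaxed integer case.
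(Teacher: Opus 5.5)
Your core argument is the paper's own one-line proof: constraint~\eqref{eq:C2} is tight at the optimum, so the spacing cancels from $(N_v^\star-1)\delta^\star$ and only $D_{\max}/T_s$ remains. Your treatment of the floor is more careful than the paper's ``by construction'' claim. That claim is exact only for $N_{v,\mathrm{cont}}^\star$, or for a re-spaced codebook with $N_v\ge2$. At the integer optimum~\eqref{eq:Nvstar} the peak can fall short by up to one spacing, e.g.\ $41\times0.242\lambda\approx9.92\lambda$ rather than $10\lambda$ at the baseline.
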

\noindent\emph{Proof.} Constraint~\eqref{eq:C2} holds with equality at the
optimum by construction.

Equation~\eqref{eq:vmax} is a hard physical statement: the fastest antenna in
a VIM-MA system must traverse the entire track within a single slot. At the
manuscript's baseline ($f_c=28$~GHz, $D_{\max}=10\lambda=107$~mm,
$T_s=1$~ms) this gives $v_{\max}=107$~m/s, i.e.\ $385$~km/h, and the antenna
must reach that speed from rest within a small fraction of the slot. Reaching
it in $5\%$ of the slot requires an acceleration of $2.2\times10^{5}\,g$. By
contrast, motor-driven MAs have response times of milliseconds to
seconds and MEMS-driven MAs of microseconds to
milliseconds~\cite{matutorial,ning2024arch}, with movement speeds several orders
of magnitude below this. The baseline operating point is therefore infeasible
by roughly five orders of magnitude, and every numerical result reported in
Section~\ref{sec:sim} inherits that infeasibility.

\begin{table}[t]
\centering
\caption{Operating envelope of VIM-MA. $N_v$ depends only on
$K\triangleq D_{\max}/\lambda$ through
$N_v=\lfloor1+K/(d_c/\lambda)\rfloor$; for the baseline
$\rho_{\max}=0.5$, $d_c/\lambda\simeq0.242$, giving
$N_v=\lfloor 1+K/0.242\rfloor$. It is therefore carrier-independent;
feasibility is set entirely by $v_{\max}=D_{\max}/T_s$.}
\label{tab:feas}
\renewcommand{\arraystretch}{1.2}
\footnotesize
\begin{tabular}{|l|c|c|c|c|c|}
\hline
\textbf{Operating point} & $\lambda$ & $T_s$ & $v_{\max}$ &
\textbf{bits} & \textbf{feasible?}\\
 & (mm) & (ms) & (m/s) & & \\
\hline
$28$\,GHz, $K{=}10$ (baseline) & $10.71$ & $1$ & $107.1$ & $5$ & no \\
$28$\,GHz, $K{=}10$           & $10.71$ & $20$ & $5.36$  & $5$ & no \\
$140$\,GHz, $K{=}10$          & $2.14$  & $10$ & $2.14$  & $5$ & marginal \\
$300$\,GHz, $K{=}10$          & $1.00$  & $20$ & $0.50$  & $5$ & plausible \\
$300$\,GHz, $K{=}4$           & $1.00$  & $20$ & $0.20$  & $4$ & yes \\
\hline
\end{tabular}
\end{table}

Table~\ref{tab:feas} shows how to recover feasibility. Since $N_v$ depends
only on the normalized aperture $K=D_{\max}/\lambda$, the index payload is
carrier-frequency invariant, whereas $v_{\max}=K\lambda/T_s$ falls linearly
with $\lambda$ and inversely with $T_s$. Moving to a sub-THz carrier and a
longer slot therefore preserves the index rate while bringing the required
speed into the range of MEMS actuation. For this reason, the
mechanically plausible operating point used in the revised discussion is
$f_c=300$~GHz, $T_s=20$~ms, and $K\in\{4,10\}$; the $28$~GHz,
$T_s=1$~ms case is kept only as a normalized reference case for comparison.

The speed calculation above addresses the dominant feasibility
constraint, but it does not by itself solve all implementation issues. The
following four effects specify what must be changed in the signalling model,
what can be handled by receiver compensation, and what remains as an operating
limit for practical VIM-MA.

\emph{Retrace and duty cycle.} All codebook velocities in~\eqref{eq:codebook}
are non-negative, so the array drifts monotonically and reaches the end of the
track after one slot at $v_{\max}$; it must be returned before the next slot,
which at best halves the duty cycle and itself generates a Doppler transient. A
symmetric codebook $v_n=(n-(N_v+1)/2)\delta$ is zero-mean and removes the
systematic drift at no cost in $N_v$, and the retrace interval can be charged
as a guard period in the SE accounting.

\emph{Finite acceleration.} Constant velocity over the whole slot is an
idealization: realistic ramping produces a linear-FM signature rather than a
pure tone, spreading each Doppler component and degrading~\eqref{eq:pairdist}.
Either~\eqref{eq:y_n} must be generalised to
$p_m(t)=p_{0,m}+\int_0^tv(\tau)d\tau$, or a trapezoidal velocity profile with an
explicitly excluded transient must be specified.

\emph{Doppler contamination.} The scheme presumes that all Doppler originates
from the transmitter's own motion. It does not. A moving UE with
speed $v_{\mathrm{UE}}$ adds a receive-side Doppler component
$v_{\mathrm{UE}}\cos\theta_l^{\mathrm{rx}}/\lambda$ on path $l$, where
$\theta_l^{\mathrm{rx}}$ is the angle between the UE velocity and the arrival
direction of that path. In the worst aligned case, a pedestrian UE with
$v_{\mathrm{UE}}=1.4$~m/s at $28$~GHz contributes $131$~Hz, which is $54\%$ of
one velocity level ($242$~Hz); a scatterer at $3$~m/s exceeds a full level.
Worse, a carrier
frequency offset of only $0.1$~ppm corresponds to $2.8$~kHz, i.e.\ $11.6$
velocity levels. Since the index is read from an absolute frequency, VIM-MA is
\emph{more} sensitive to CFO than conventional modulation, in which CFO is a
common phase rotation. A differential encoding across slots, or joint
CFO--velocity estimation with a pilot level $v_1=0$, appears necessary.
The corresponding robustness study is
reported in Section~\ref{sec:simrobust} and Fig.~\ref{fig:cfo}. It confirms
the concern quantitatively: an uncompensated normalized offset of $0.1$
Doppler units already doubles the index error rate, and one full level
destroys the index entirely. It also shows that the impairment is benign once
the offset is estimated to within roughly a tenth of a level, which the
$v_1=0$ pilot level makes possible.

\emph{Actuation energy.} At the baseline the kinetic energy per slot for a
$1$~g element is $5.7$~J, obtained directly from
$E_k=\frac{1}{2}mv_{\max}^2$ with $m=10^{-3}$~kg and
$v_{\max}=107$~m/s. This is a lower-bound energy estimate rather than an
actuator-specific measurement, since friction, driver losses and acceleration
profiles can only increase the required input energy. It corresponds to
kilowatts of mechanical power at $1000$ slots/s, which exceeds any practical
RF power budget and negates the low-hardware-cost motivation of
Section~\ref{sec:intro}. At the recommended
$300$~GHz point with a milligram-scale MEMS element it falls to
$\sim\!2\times10^{-8}$~J per slot, which is negligible.
Thus the operating envelope addresses speed and actuation energy,
the symmetric-codebook and guard-period discussion addresses retrace, the
CFO study in Section~\ref{sec:simrobust} addresses common frequency offsets,
and finite acceleration plus uncontrolled user/scatterer motion remain model
extensions that bound the intended deployment regime.

\section{Simulation Results}
\label{sec:sim}

\begin{table}[t]
\centering
\caption{Comparison of IM Schemes}
\label{tab:compare}
\renewcommand{\arraystretch}{1.2}
\begin{tabular}{|l|c|c|}
\hline
\textbf{Scheme} & \textbf{Index entity} & \textbf{Doppler} \\
\hline
SM~\cite{mesleh2008SM}   & Antenna            & None \\
FA-IM~\cite{zhu2024FAIM}      & FA position        & Antenna motion \\
JDDIM~\cite{tek2024JDDIM}     & Delay-Doppler bin  & Multipath \\
\textbf{VIM-MA}               & \textbf{MA velocity} & \textbf{Antenna motion}\\
\hline
\end{tabular}
\end{table}

Table~\ref{tab:compare} highlights the position--velocity coupling that is
unique to VIM-MA. In FA-IM the position pattern is simultaneously the index and
the channel state, so there is no tension between them; in VIM-MA the velocity
sets the Doppler index and, separately, the end-of-slot displacement, which is
what creates the coupled design problem of Section~\ref{sec:convex}.

Unless stated otherwise the simulations use $f_c=28$~GHz
($\lambda\approx10.71$~mm), $T_s=1$~ms, $N=64$ samples with receiver noise
bandwidth $W=64$~kHz, $D_{\max}=10\lambda$, $M=4$ MAs separated by $8\lambda$
with an equal-gain beamformer, $M_s=4$, $\gamma=3$, $\rho_{\max}=0.5$ so that
$d_c\approx0.242\lambda$, a target
$\epsilon_{\mathrm{idx}}=10^{-2}$,
and $1500$ Monte Carlo slots per point. Only Fig.~\ref{fig:valid}(a) sweeps
$L$; every figure reporting an error probability, an achievable codebook size
or a throughput uses $L=32$, for the reason given after
Proposition~\ref{prop:corr}. Because the received phase depends on the velocity
only through the normalized displacement $u\triangleq vT_s/\lambda$, all
results are carrier-frequency invariant and apply verbatim at both the $28$~GHz
reference point and the feasible $300$~GHz point of Table~\ref{tab:feas}.

Two spacing rules are compared. The AIF-CRB spacing rule uses
$\delta^\star=\max(\delta_{\min},d_c/T_s)$ and therefore assumes that the
local oracle bound is attainable. The reliability-constrained rule uses
$\delta^\star=\max(\delta_{\min},\delta_{\mathrm{amb}},d_c/T_s)$, with
$\delta_{\mathrm{amb}}=\kappa\lambda/T_s$ obtained by bisection on the
simulated index-error probability of the detector actually used. Plotting the
two rules together shows which conclusions come from the local bound and which
remain when the receiver must meet the target
$\epsilon_{\mathrm{idx}}=10^{-2}$. All baselines share the same aperture,
slot duration, transmit power and QAM order; Section~\ref{sec:simcomp} shows
that matching these is \emph{not} sufficient, because the resolution criterion
applied to each scheme's index alphabet must be matched as well.

\begin{figure*}[t]
\centering
\includegraphics[width=\textwidth]{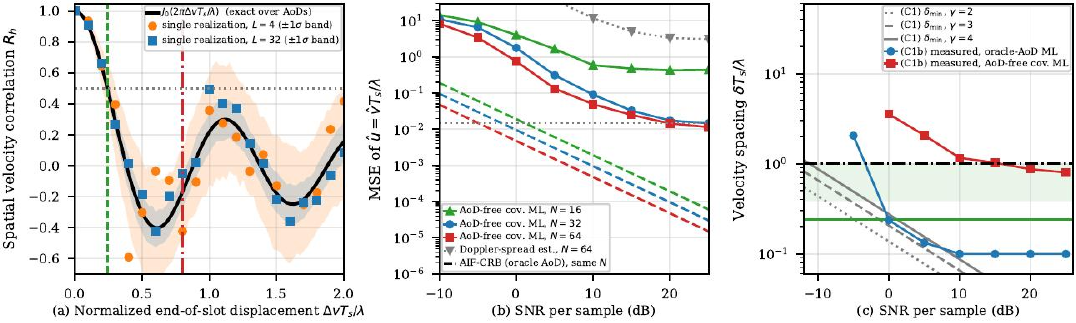}
\caption{Validation of the three constraint ingredients.
(a) Spatial velocity correlation: the solid curve is the exact AoD-averaged
law~\eqref{eq:Rh_J0}, the markers show one AoD realization, and the bands show
the $\pm1\sigma$ spread over realizations. The vertical green dashed line marks
the decorrelation spacing $d_c=0.242\lambda$ for $\rho_{\max}=0.5$, while the
red dash-dot line marks the smallest AoD-free ambiguity spacing observed in
the simulation. (b) Velocity estimation Mean Square Error (MSE) against the AIF-CRB~\eqref{eq:CRLB}
for three observation lengths: the solid colored curves are the simulated
AoD-free covariance-ML MSEs, while the dashed curves of the same colors are
the corresponding oracle-AoD AIF-CRBs for the same $N$. The gray dotted curve
with triangle markers is the Doppler-spread estimator, and the gray dotted
horizontal line is the $1.5\times10^{-2}$ MSE target needed to separate the
$42$-level $D_{\max}=10\lambda$, $d_c=0.242\lambda$ codebook. (c) The three
spacing floors, with~\eqref{eq:C1b} measured for both receivers; the gray
dotted/dashed/solid curves are the local AIF-CRB spacings for
$\gamma=2,3,4$, the green decorrelation line is $d_c/T_s$, the light-green
band marks the region above the maximum possible decorrelation spacing
$0.383\lambda/T_s$, and the black dash-dot line is the Rayleigh spacing
$\lambda/T_s$.}
\label{fig:valid}
\end{figure*}
\begin{figure*}[t]
\centering
\includegraphics[width=\textwidth]{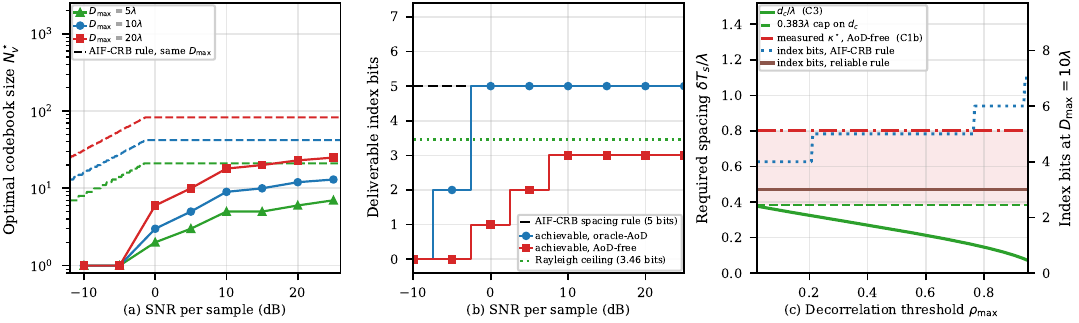}
\caption{Codebook design behavior under the
AIF-CRB/decorrelation rule and the reliability-constrained rule. (a) Optimal
codebook size for three track lengths; dashed curves use only
$\delta_{\min}$ and $d_c/T_s$, while solid curves also impose the measured
AoD-free ambiguity spacing $\delta_{\mathrm{amb}}$. (b) Deliverable index
bits at $P_{\mathrm{e}}^{\mathrm{idx}}\le\epsilon_{\mathrm{idx}}=10^{-2}$ for
the oracle-AoD and AoD-free receivers, compared with the AIF-CRB spacing
reference and the green dotted Rayleigh-resolution ceiling
$\log_2(1+D_{\max}/\lambda)=3.46$. (c) Required spacing and the resulting
payload versus the decorrelation threshold $\rho_{\max}$; the shaded gap shows
that the AoD-free reliability spacing remains above every spacing available
from~\eqref{eq:C3}.}
\label{fig:design}
\end{figure*}

\subsection{Validation of the Constraint Ingredients}

Fig.~\ref{fig:valid}(a) separates the two readings of
Proposition~\ref{prop:corr}. The ensemble mean coincides with
$J_0(2\pi\Delta vT_s/\lambda)$ at every $L$, as the proposition states exactly,
whereas a single AoD realization fluctuates about it by roughly $\pm0.25$ near
the first null at $L=4$, falling below $\pm0.09$ at $L=32$. Thus
the Bessel curve should be interpreted as the average correlation law used for
codebook design, while the shaded and marked curves show how much a finite
AoD realization can deviate from that average. The same panel also explains
why the decorrelation constraint is not the limiting constraint for the
AoD-free detector: the green vertical marker gives
$d_c=0.242\lambda$ for $\rho_{\max}=0.5$, whereas the red marker represents
the larger spacing required by the AoD-free ambiguity constraint. Once the
required spacing lies to the right of the first zero of $J_0$, adjacent
entries are already decorrelated for any admissible $\rho_{\max}$.

Fig.~\ref{fig:valid}(b) checks whether the local AIF-CRB is
attainable by practical AoD-free estimators. The gray dotted horizontal line
is included to give this MSE plot a codebook-design meaning: with
$D_{\max}=10\lambda$ and $d_c=0.242\lambda$, the decorrelation rule supports
$\lfloor1+D_{\max}/d_c\rfloor=42$ velocity levels, and the estimator MSE must
fall below the marked value $(d_c/2)^2\simeq1.5\times10^{-2}$ to separate such
a dense codebook reliably. The dashed curves are the AIF-CRBs for the same
observation lengths as the solid MSE curves, so the vertical gap between a
solid curve and the dashed curve of the same color is the estimator loss
relative to the local oracle-AoD bound. These dashed AIF-CRB curves decrease
as $N/(N^2-1)$, so increasing $N$ still provides the expected temporal
processing gain. However, the simulated AoD-free estimators remain well above
the oracle bound. In particular, the covariance-matched detector with $N=64$
reaches the gray 42-level target only around
$20$--$25$~dB, while the $N=16$ curve stays above it over the simulated SNR
range. This gap is the reason for adding the ambiguity spacing
$\delta_{\mathrm{amb}}$ in~\eqref{eq:C1b}: a spacing derived only from
$\delta_{\min}=\gamma\sqrt{\mathrm{CRB}_{\mathrm{AIF}}}$ would assume an
accuracy that the AoD-free receiver has not achieved.

Fig.~\ref{fig:valid}(c) then puts the same observation directly in
the spacing domain. The gray curves are the local AIF-CRB spacings for
different margins $\gamma$, the green horizontal line is $d_c/T_s$ at
$\rho_{\max}=0.5$, and the black dash-dot line is the Rayleigh spacing
$\lambda/T_s$. For an oracle-AoD receiver, the measured ambiguity spacing
drops below $d_c/T_s$ at moderate SNR, so the decorrelation constraint can
govern the final codebook. For the AoD-free covariance detector, the measured
spacing remains around $0.80\lambda/T_s$ or larger, which is above the maximum
possible decorrelation spacing $0.383\lambda/T_s$. The active constraint is
therefore~\eqref{eq:C1b}, not~\eqref{eq:C3}, whenever no per-path AoD
information is available.

\subsection{Codebook Design Behavior}
\label{sec:simdet}

Fig.~\ref{fig:design}(a) carries the spacing rules into the
codebook-size calculation. The dashed curves show the local
AIF-CRB/decorrelation design, for which the high-SNR ceiling is set by
$d_c=0.242\lambda$. The solid curves show the same track lengths after the
AoD-free reliability constraint~\eqref{eq:C1b} is imposed. The qualitative
trend is unchanged: $N_v^\star$ grows with SNR, then saturates, and a longer
track gives a larger ceiling. The numerical ceiling changes, however. For
$D_{\max}\in\{5,10,20\}\lambda$, the decorrelation-only ceilings
$\lfloor1+D_{\max}/d_c\rfloor=\{21,42,83\}$ become approximately
$\{6,12,23\}$ reliable velocity levels once the measured ambiguity floor is
used.
\begin{figure*}[h]
\centering
\includegraphics[width=\textwidth]{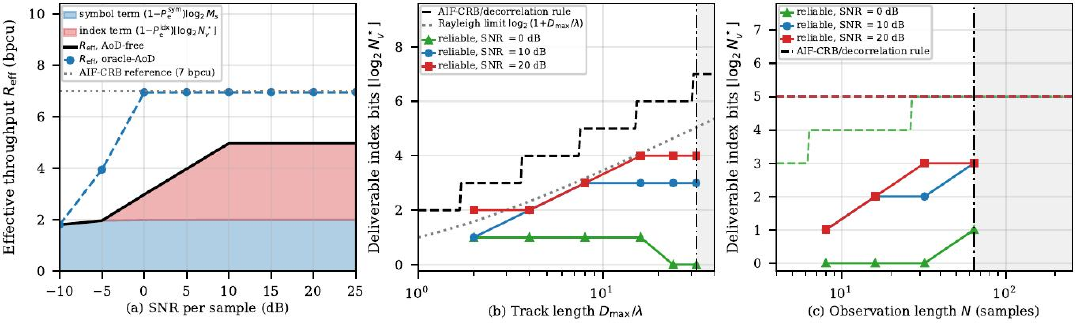}
\caption{Effective-throughput and hardware-knob behavior.
(a) The AoD-free throughput is decomposed into the fixed QAM-symbol
contribution and the detected velocity-index contribution; the dashed curve is
the oracle-AoD reference, and the gray dotted horizontal line is the
AIF-CRB/decorrelation reference of $7$~bpcu, obtained from 2 QAM bits plus
5 velocity-index bits. (b) Track length is swept as a hardware aperture
knob: the dashed curve is the AIF-CRB/decorrelation rule, the dotted gray
curve is the Rayleigh reference $\log_2(1+D_{\max}/\lambda)$, the solid curves
are the reliability-constrained delivered bits, and the vertical dash-dot line
marks the sampled-Doppler limit $D_{\max}\le N\lambda/2$. (c) Observation
length is swept at fixed aperture; the dashed curves show the local
AIF-CRB/decorrelation prediction, the solid curves show the
reliability-constrained result, and the vertical dash-dot line marks the
available-sample limit $N\le WT_s$.}
\label{fig:knobs}
\end{figure*}

Fig.~\ref{fig:design}(b) converts those codebook sizes into the
integer payload delivered to the bit splitter. For the
$D_{\max}=10\lambda$ track, the AIF-CRB/decorrelation spacing reference
permits $42$ velocity levels and therefore five delivered index bits. The
oracle-AoD receiver reaches that five-bit payload from about $0$~dB, which
shows that the local rule is attainable when the per-path AoDs are known. The
green dotted Rayleigh line is not an error-rate constraint; it is the natural
one-wavelength resolution reference, equal to
$\log_2(1+D_{\max}/\lambda)=3.46$ bits for this track. The AoD-free
covariance-matched receiver instead saturates at three bits from about
$10$~dB onward, because its measured ambiguity spacing is near
$0.9\lambda/T_s$ and the same track then supports only about twelve reliable
velocity levels. The staircase shape is the integer mapping
$\lfloor\log_2N_v^\star\rfloor$: extra reliable levels increase the payload
only when the next power of two is crossed.

Fig.~\ref{fig:design}(c) explains why relaxing
$\rho_{\max}$ does not recover the missing bits for the AoD-free receiver.
The green curve is the decorrelation spacing $d_c/\lambda$ from~\eqref{eq:C3},
which cannot exceed $0.383$ for any choice of $\rho_{\max}$. The red
horizontal floor is the measured AoD-free spacing required to meet
$\epsilon_{\mathrm{idx}}=10^{-2}$, and it lies above the entire feasible
$d_c$ range. Therefore the active spacing is set by~\eqref{eq:C1b}, not by
the decorrelation threshold, and the payload on the right axis remains flat
while the AIF-CRB/decorrelation-only reference continues to rise.

\subsection{Spectral Efficiency and Design Trade-offs}

The three panels of Fig.~\ref{fig:knobs} repeat the main design-knob
studies using the effective throughput~\eqref{eq:Reff}, so an index bit is
credited only when the velocity index is detected with the target reliability.
Panel~(a) decomposes the throughput into the QAM-symbol term and the
velocity-index term. With $M_s=4$, the symbol term contributes two bits, and
the SNR-dependent gain comes from the velocity-index payload. The AoD-free
receiver saturates near $4.97$~bpcu, corresponding to two QAM bits plus three
reliable velocity-index bits. The oracle-AoD reference approaches
$6.95$~bpcu, close to the gray dotted $7$~bpcu AIF-CRB/decorrelation reference
shown in the panel. The difference between the AoD-free saturation and this
reference is about two index bits, which quantifies the cost of detecting the
velocity index without per-path angle knowledge.

Panel~(b) treats the track length as a hardware aperture knob.
At high SNR, a longer track admits more separated velocity hypotheses and
therefore more index bits, with diminishing logarithmic returns. At lower SNR,
however, increasing the track also increases the number of competing
hypotheses before the detector has enough separation to distinguish them, so
aperture and SNR are not interchangeable. The vertical dash-dot line marks the
sampled-Doppler limit $D_{\max}\le N\lambda/2=32\lambda$, which prevents
aliasing of the velocity signature.

Panel~(c) shows that increasing the observation length has a
limited benefit for the AoD-free detector. The dashed local rule keeps
improving with the AIF-CRB factor $N/(N^2-1)$, but the
reliability-constrained curves saturate because the dominant ambiguity spacing
is governed by the Dirichlet-kernel separation in~\eqref{eq:pairdist}, whose
main-lobe width is set primarily by $T_s$. In the simulated setting, the
AoD-free detector reaches three reliable index bits by about $N=32$, and
doubling to the bandwidth-limited value $N=64$ does not add another bit.

\subsection{Index Reliability}

Fig.~\ref{fig:iep} evaluates the reliability-rate trade-off that would be obtained by varying the safety factor $\gamma$ in the AIF-CRB-based spacing rule. Simulating those codebooks with the AoD-free detector shows that
$\gamma$ cannot deliver this. At the published design spacing
$\delta T_s=d_c=0.242\lambda$ the error probability falls only from $0.85$ to
$0.23$ over $35$~dB, a floor rather than a slope, so no value of $\gamma$
applied to a bound the receiver does not attain moves the curve to the target.
Doubling the spacing to $0.5\lambda$ reaches $3.1\times10^{-2}$ at $25$~dB, and
only at $0.90\lambda$--$1.00\lambda$, that is $\kappa$ at or near the Rayleigh
value, does the curve cross $10^{-2}$ within the range, at $20$ and $15$~dB
respectively. The oracle receiver at the same $d_c$ spacing meets the target at
$0$~dB and tracks the union bound~\eqref{eq:unionbound}, confirming that the
bound itself is sound and that it is the AoD knowledge, not the bound, doing
the work. The threshold effect visible in the AoD-free curves is the classical
outlier regime of frequency estimation~\cite{rife1974,chazan1975}, which is
precisely the regime in which hypothesis spacing must be set.

\begin{figure}[tb]
\centering
\includegraphics[width=1.03\columnwidth]{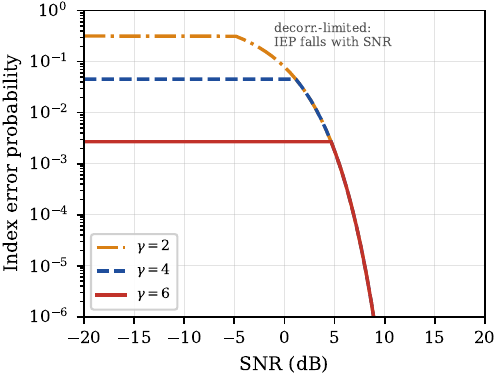}
\caption{Index error probability of codebooks designed at four fixed
spacings over the same $10\lambda$ track, with the oracle receiver at
$\delta T_s=d_c$ and the union bound~\eqref{eq:unionbound} for reference.}
\label{fig:iep}
\end{figure}
\subsection{Comparison at Matched Criterion and Matched Reliability}
\label{sec:simcomp}

\begin{figure}[tb]
\centering
\includegraphics[width=\columnwidth]{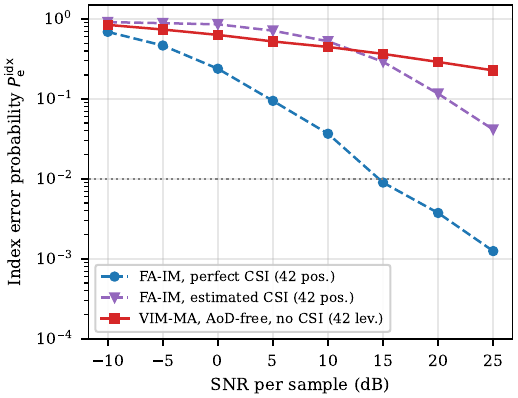}
\caption{Index error probability with $42$ index states for every scheme,
i.e.\ under the common $\rho_{\max}=0.5$ criterion. FA-IM is shown with
perfect CSI and with a pilot budget of $N=64$ symbols shared over its $42$
candidate positions.}
\label{fig:cmpiep}
\end{figure}

A fair comparison requires more than matching aperture, slot duration and
transmit power: the \emph{resolution criterion} applied to each index alphabet
must be matched too, and this determines the outcome. Indexing FA-IM positions
at the half-wavelength spatial Nyquist interval yields $21$ positions over a
$10\lambda$ aperture, i.e.\ four index bits, against the $42$ levels and five
bits granted to VIM-MA by the correlation criterion. The comparison is not
meaningful, because FA-IM positions obey exactly the same Bessel spatial
correlation law: under the same $\rho_{\max}=0.5$ tolerance FA-IM also obtains
$\lfloor1+10\lambda/0.242\lambda\rfloor=42$ positions and five index bits. The
two index domains are therefore \emph{rate-equivalent} over a given aperture,
and any advantage must be sought elsewhere. The same applies to the second
baseline: holding conventional MA at $\log_2M_s=2$~bpcu while VIM-MA adapts its
codebook is not a matched comparison, and since $N_v^\star$ grows as
$\mathrm{SNR}^{1/2}$ the velocity dimension has a pre-log of $\tfrac12$ against
a pre-log of $1$ for constellation expansion.

Fig.~\ref{fig:cmpiep} applies the common criterion and gives all schemes $42$
index states. With perfect CSI, FA-IM meets the target at about $15$~dB,
confirming the rate parity just argued. Charged a realistic pilot budget of
$N=64$ symbols shared over $42$ candidate positions, giving a per-position
estimation error variance $(1+N\,\mathrm{SNR}/N_v)^{-1}$, it does not reach the
target anywhere in the range and ends at $4.2\times10^{-2}$ at $25$~dB. VIM-MA
supports no $42$-state AoD-free codebook at all. Forty-two index states are
thus attainable only by a receiver holding side information of one form or the
other, which locates the real difference between the schemes.

\begin{figure}[tb]
\centering
\includegraphics[width=0.99\columnwidth]{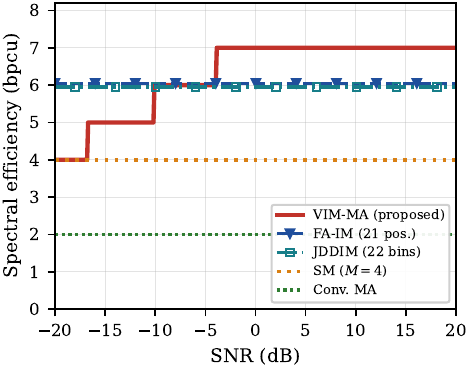}
\caption{Effective throughput against the state of the art under a common
resolution criterion, common aperture, slot, power and QAM order. Every index
alphabet is sized by the same $10^{-2}$ index-error target, and every scheme
is charged the side information its index decision requires.}
\label{fig:sota}
\end{figure}

Fig.~\ref{fig:sota} compares the schemes after each alphabet has
been sized to meet the same index-error target. With perfect CSI, FA-IM reaches
five index bits from about $15$~dB and therefore exceeds the AoD-free VIM-MA
curve at high SNR. This occurs because FA-IM and VIM-MA are both evaluated
with the same correlation-based spacing rule over the same aperture. JDDIM,
restricted here to the flat channel model, can use Doppler-bin indices at the
natural resolution $1/T_s$ but does not gain additional delay-index bits
without a resolvable delay spread. Conventional MA with unconstrained adaptive
QAM overtakes VIM-MA at about $7$~dB and reaches $8.5$~bpcu at $20$~dB, as
expected from constellation-rate adaptation.

The main benefit visible in Fig.~\ref{fig:sota} is an SNR shift
rather than a larger high-SNR alphabet. VIM-MA delivers $4.97$~bpcu, composed
of three velocity-index bits and two QAM bits, from $10$~dB without CSI. FA-IM
with the same pilot budget needs about $20$~dB for three index bits and
$25$~dB for four. This $10$~dB gap shows the practical value of detecting the
velocity index non-coherently. The oracle-AoD curve reaches $6.95$~bpcu at
$0$~dB, so the difference between that curve and the AoD-free curve quantifies
the value of angle knowledge and motivates hybrid receivers with coarse angle
estimation. Under the $M_s\le16$ cap representative of phase-noise-limited
sub-THz operation, the adaptive-QAM baseline saturates at four bits and VIM-MA
leads by about one bit per slot above $5$~dB; this capped comparison is the
most relevant one for the mechanically feasible operating points in
Section~\ref{sec:feas}.

\subsection{Robustness to Frequency Offset and User Mobility}
\label{sec:simrobust}

\begin{figure}[tb]
\centering
\includegraphics[width=\columnwidth]{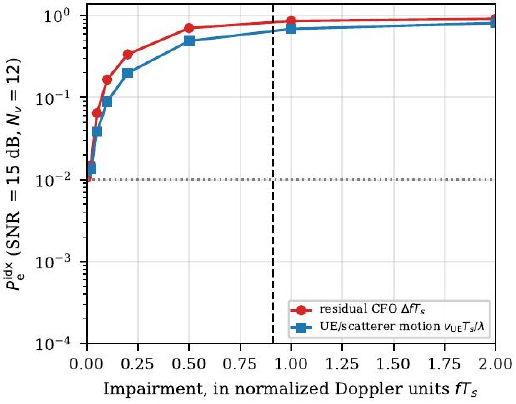}
\caption{Index error probability at $15$~dB with $N_v=12$
under an uncompensated common Doppler offset, expressed in normalized units
$fT_s$. The gray dotted line marks the target
$P_{\mathrm{e}}^{\mathrm{idx}}=10^{-2}$, and the black dashed vertical line
marks one velocity-level spacing, which corresponds to $0.91$ on this axis.}
\label{fig:cfo}
\end{figure}

Fig.~\ref{fig:cfo} quantifies the contamination analyzed in
Section~\ref{sec:feas}. The impairment is modeled as an unknown common
Doppler offset added to the movement-induced Doppler, of magnitude $\Delta fT_s$
for oscillator offset and
$v_{\mathrm{UE}}T_s\cos\theta_l^{\mathrm{rx}}/\lambda$ for a
moving user, with $v_{\mathrm{UE}}$ and $\theta_l^{\mathrm{rx}}$ defined as in
Section~\ref{sec:feas}. Starting from $1.5\times10^{-2}$ with no impairment, the index error rate
reaches $8.3\times10^{-2}$ at an offset of $0.05$ normalized units,
$0.17$ at $0.1$, and $0.87$ at one full level. The tolerance is therefore
about $0.02$ normalized units, i.e.\ roughly $2\%$ of a velocity level, which
corresponds to $\Delta f\le0.02/T_s$, that is, $20$~Hz at $T_s=1$~ms and $1$~Hz
at the recommended $T_s=20$~ms. Expressed as a fractional oscillator accuracy
this is below one part per billion, which exceeds free-running tolerances, so
explicit compensation is required.

Compensation is nonetheless tractable in principle, because the two effects act
differently on the Doppler spectrum. An oscillator offset is a pure
\emph{translation} of~\eqref{eq:jakes}, whereas the velocity index is carried
by its \emph{support width}. Estimating the spectral centroid from the $v_1=0$
pilot level and removing it therefore renders~\eqref{eq:covml} offset
invariant, at the cost of one codebook entry. User and scatterer motion is less
tractable: each path acquires its own receive-side Doppler
$v_{\mathrm{UE}}\cos\theta_l^{\mathrm{rx}}/\lambda$, which broadens rather than
translates the spectrum and therefore biases the width statistic directly. This
imposes a mobility limit of approximately $v_{\mathrm{UE}}\ll\delta^\star$ and
confines VIM-MA to nomadic or fixed-wireless deployments. It is the most
significant remaining restriction on the scheme.

Taken together the results confirm the analytical framework, with one
substantive correction: the AIF-CRB is an oracle bound, so the codebook ceiling
it predicts is reached only with per-path AoD knowledge, and an AoD-free
receiver is limited by Doppler ambiguity to $3$ index bits over a $10\lambda$
track rather than $5$. Proposition~\ref{prop:corr} is validated directly, and
the closed-form optimizer remains correct as the solution of
Problem~\eqref{eq:P0}; what changes is which constraint binds.

\section{Conclusion}
\label{sec:conc}

We proposed VIM-MA, in which the movement velocity of a movable antenna carries
information through index modulation of the induced Doppler shift, and showed
that the codebook design problem, non-convex through the bilinear coupling of
$N_v$ and $\delta$, admits a closed-form optimal solution after a logarithmic
change of variables.

Three structural results follow. The peak codebook velocity is exactly
$v_{\max}=D_{\max}/T_s$, which confines VIM-MA to short-wavelength carriers with
long slots and MEMS-class actuation; $d_c\le0.383\lambda$ for every
$\rho_{\max}$, so the decorrelation-limited codebook always operates below the
Rayleigh Doppler resolution and is therefore threshold-limited; and the velocity
is identifiable without per-path angle knowledge only through the statistical
scattering model. Consequently the five-bit ceiling over a $10\lambda$ track is
attained only with oracle AoDs, whereas the CSI-free covariance-matched detector
is ambiguity-limited to three bits. Under a common resolution criterion the
velocity and position domains are rate-equivalent, and the advantage of the
velocity domain is not rate but side information: it reaches its payload about
$10$~dB before position indexing charged a realistic pilot budget, and exceeds a
phase-noise-limited adaptive-QAM baseline by roughly one bit per slot above
$5$~dB. Accounting for the ambiguity spacing~\eqref{eq:C1b}
therefore changes the practical design interpretation. Once the AoD-free
ambiguity floor exceeds $d_c/T_s$, the decorrelation threshold $\rho_{\max}$
no longer governs the delivered payload; similarly, increasing the observation
length improves reliability only up to the ambiguity-limited regime, before
the sampling cap $N\le WT_s$ is reached. The same sampled-Doppler argument
imposes the aperture limit $D_{\max}\le N\lambda/2$. Overall, the revised
model identifies VIM-MA as a reliability-limited, CSI-light index-modulation
mechanism whose useful regime is set jointly by receiver ambiguity,
mechanical feasibility, and Doppler-offset compensation.


\appendices
\section{Proof of Proposition~\ref{prop:corr}}
\label{app:volcorr}

Substituting~\eqref{eq:hm_end} into $R_h(v,v') =
\mathbb{E}[h_m(T_s;v)\,h_m^*(T_s;v')]$ and using
$\mathbb{E}[\alpha_l\alpha_{l'}^*]=\delta_{ll'}^{\mr{K}}$
(Kronecker delta, superscripted to distinguish it from the velocity
spacing $\delta$), the double sum
over $(l,l')$ collapses to a single sum and the $p_{0,m}$-dependent
phases cancel exactly:
\begin{equation}
  R_h(v,v')
  = \frac{1}{L}\sum_{l=1}^{L}
     e^{j\frac{2\pi}{\lambda}(v-v')T_s\cos\theta_l}.
\end{equation}
For $\theta_l\sim\mathcal{U}(0,\pi)$ as $L\to\infty$, substituting
$u=\cos\theta$:
\begin{equation}
  R_h(v,v')
  = \int_{-1}^{1}\frac{1}{\pi}
    \frac{e^{j\frac{2\pi}{\lambda}(v-v')T_s u}}{\sqrt{1-u^2}}\,du
  = J_0\!\left(\frac{2\pi(v-v')T_s}{\lambda}\right). \quad
\end{equation}

\section{Derivation of the Average-Information Bound~\eqref{eq:CRLB}}
\label{app:crlb}

The $L$ distinguishable paths are assumed resolved into orthogonal delay bins
with independent noise. We first obtain an oracle CRLB for one resolved path,
conditioned on $\alpha_l$ and on the receiver-available AoD $\theta_l$; we then
sum the pathwise effective information, average it over the path gains and
isotropic AoDs, and invert only at the final step. The result is therefore an
average-information CRB, not the expectation of the conditional CRLB.

For the $l$-th path the unknown parameter vector is
$\bm{\xi}_l=[v_n,\phi_l]^T$, where $\phi_l=\angle(\beta_ls)$ is a nuisance
phase and $\theta_l$ is conditioned upon. The noiseless observation is
$\mb{u}_l=A_le^{j\phi_l}\mb{a}(f_{{\rm D},l,n})$ with $A_l^2=P|\beta_l|^2$ the
path power after averaging over the reliably detected constellation. Writing
$\mb{D}=\diag(0,\ldots,N-1)$, $c_l\triangleq T_s\cos\theta_l/\lambda$ and
$\kappa_l\triangleq2\pi c_l/N$, and using
$\partial f_{{\rm D},l,n}/\partial v_n=\cos\theta_l/\lambda$,
\begin{equation}
  \frac{\partial\mb{u}_l}{\partial v_n}=j\kappa_lA_le^{j\phi_l}\mb{D}\mb{a},
  \qquad
  \frac{\partial\mb{u}_l}{\partial\phi_l}=j\mb{u}_l .
  \label{eq:du}
\end{equation}
For $\mb{y}_l\sim\CN(\mb{u}_l,N_0\mb{I})$ with parameter-independent
covariance, the proper-complex FIM is
$[\mb{J}_l]_{ij}=\frac{2}{N_0}\mathrm{Re}\{(\partial\mb{u}_l/\partial
[\bm{\xi}_l]_i)^H(\partial\mb{u}_l/\partial[\bm{\xi}_l]_j)\}$. Since
$\mb{a}^H\mb{a}=N$, $\mb{a}^H\mb{D}\mb{a}=N(N-1)/2$ and
$\|\mb{D}\mb{a}\|^2=N(N-1)(2N-1)/6$, substituting~\eqref{eq:du} gives
\begin{equation}
  \mb{J}_l=\frac{2A_l^2}{N_0}
  \begin{bmatrix}
    \dfrac{(2\pi c_l)^2(N-1)(2N-1)}{6N} & \pi c_l(N-1)\\[8pt]
    \pi c_l(N-1) & N
  \end{bmatrix},
  \label{eq:FIM_J}
\end{equation}
with $\det(\mb{J}_l)=(2A_l^2/N_0)^2(2\pi c_l)^2(N^2-1)/12$. Eliminating the
nuisance phase by the Schur complement yields the scalar effective information
about $v_n$,
\begin{equation}
\begin{aligned}
  \mc{J}_{{\rm eff},l}
  &\triangleq[\mb{J}_l]_{11}-[\mb{J}_l]_{12}^2/[\mb{J}_l]_{22}\\
  &=\frac{A_l^2}{N_0}\frac{(2\pi T_s)^2(N^2-1)}{6\lambda^2N}\cos^2\theta_l
   =\frac{1}{\mr{CRLB}_l(v_n\mid\theta_l,\alpha_l)},
\end{aligned}
  \label{eq:Jeff_path}
\end{equation}
which is the conditional bound~\eqref{eq:crlb_cond2}.

It remains to average. With the beamformed spatial response
$g_{\mb{w}}(\theta)\triangleq\sum_mw_me^{j2\pi p_{0,m}\cos\theta/\lambda}$ we
have $\beta_l=\alpha_lg_{\mb{w}}(\theta_l)/\sqrt{L}$, so with
$a_{mm'}\triangleq2\pi(p_{0,m}-p_{0,m'})/\lambda$ and $\E[|\alpha_l|^2]=1$,
\begin{equation}
  \E_{\alpha_l,\theta_l}[|\beta_l|^2]
  =\frac{1}{L}\sum_{m,m'}w_mw_{m'}^*J_0(a_{mm'}).
  \label{eq:Ebeta2}
\end{equation}
When all nonzero antenna separations are large in wavelengths the off-diagonal
Bessel terms are negligible, and $\|\mb{w}\|^2=1$ gives
$\E[|\beta_l|^2]\approx1/L$; hence $\mr{SNR}\approx P/N_0$ and the mean SNR of
one path is $\mr{SNR}/L$. This corrects the path-power scaling without assuming
that $\E_{\alpha_l}[|\beta_l|^2\mid\theta_l]$ is constant in $\theta_l$.

The angular weighting follows from the integral representation of $J_0$: for
$\theta\sim\mc{U}(0,\pi)$,
\begin{equation}
  H(a)\triangleq\E_\theta\!\left[\cos^2\theta\,e^{ja\cos\theta}\right]
  =-J_0''(a)=J_0(a)-\frac{J_1(a)}{a},
\end{equation}
using the order-zero Bessel equation $J_0''+J_0'/a+J_0=0$ and $J_0'=-J_1$, with
$H(0)=1/2$ from $J_0(a)=1-a^2/4+O(a^4)$ and $J_1(a)=a/2+O(a^3)$. The same
expansion applied to $\E[|\beta_l|^2\cos^2\theta_l]$ replaces $J_0(a_{mm'})$ by
$H(a_{mm'})$ in~\eqref{eq:Ebeta2}, so that
\[
  \eta_{\mb{w}}
  =\frac{H(0)\|\mb{w}\|^2+\sum_{m\neq m'}w_mw_{m'}^*H(a_{mm'})}
        {J_0(0)\|\mb{w}\|^2+\sum_{m\neq m'}w_mw_{m'}^*J_0(a_{mm'})}
  \approx\frac{1/2}{1}=\frac{1}{2},
\]
the off-diagonal terms vanishing because $J_0(a)=O(|a|^{-1/2})$ and
$J_1(a)/a=O(|a|^{-3/2})$ for $|a|\gg1$.

Because the resolved paths carry independent noise their effective information
contributions add, so with $A_l^2=P|\beta_l|^2$,
\begin{equation}
  \overline{\mc{J}}_{\rm eff}
  =\E_{\bm{\alpha},\bm{\theta}}\!\left[\sum_{l=1}^{L}\mc{J}_{{\rm eff},l}\right]
  =\frac{(2\pi T_s)^2(N^2-1)}{6\lambda^2N}\,\mr{SNR}\,\eta_{\mb{w}},
\end{equation}
and inverting establishes~\eqref{eq:CRLB}. This must not be confused with
$\E[\mr{CRLB}_l]$: under Rayleigh fading $\E[1/|\alpha_l|^2]$ diverges, and for
a uniform AoD so does $\E[1/\cos^2\theta_l]$, so the literal mean of the
conditional bound is not finite, whereas the inverse expected-information bound
is a well-defined statistical design surrogate.

\bibliographystyle{unsrt}

\end{document}